\pgfplotsset{compat=newest}
\tikzset{
    max width/.style args={#1}{
        execute at begin node={\begin{varwidth}{#1}},
        execute at end node={\end{varwidth}}
    }
}
\newclass{\PostBPP}{PostBPP}
\newclass{\SampP}{SampP}
\newclass{\SampBQP}{SampBQP}
\newclass{\FBPP}{FBPP}
\newclass{\BosonFP}{BosonFP}
\newclass{\BosonP}{BosonP}
\newclass{\CompClassA}{A}
\newclass{\CompClassB}{B}
\newtheorem{theorem}{Theorem}
\newtheorem{corollary}{Corollary}
\newtheorem{proposition}{Proposition}
\newtheorem{observation}{Observation}
\newtheorem*{proposition*}{Proposition}
\newtheorem{lemma}{Lemma}
\newtheorem*{lemma*}{Lemma}
\let\epsilon\varepsilon
\definecolor{Navy}{HTML}{2874A6}
\definecolor{DarkGreen}{HTML}{117A65}
\definecolor{Green}{RGB}{115, 198, 182}
\definecolor{ultramarine}{RGB}{84,189,220}
\definecolor{violet}{RGB}{
	% 108,68,148
	145,106,184
	}
\definecolor{darkviolet}{RGB}{
	99, 56, 142 
	% 145,106,184
	}
\definecolor{lightviolet}{RGB}{
	158,136,180
	}
\definecolor{darkred}{RGB}{
	192, 57, 43
	% 236, 112, 99
	}
\definecolor{darkgreen}{RGB}{
	39, 174, 96
	}
\crefname{chapter}{Chapter}{Chapters}
\crefname{section}{Section}{Sections}
\crefname{equation}{Equation}{Equations}
\crefname{observation}{Observation}{Observations}
\crefname{figure}{Figure}{Figures}
\crefname{table}{Table}{Tables}
\crefname{appendix}{Appendix}{Appendices}
\crefname{theorem}{Theorem}{Theorems}
\crefname{corollary}{Corollary}{Corollaries}
\crefname{lemma}{Lemma}{Lemmas}
\crefname{proposition}{Proposition}{Propositions}
\crefname{definition}{Definition}{Definitions}
\crefname{algorithm}{Algorithm}{Algorithms}
\crefname{thm}{Theorem}{Theorems}
\let\autoref\cref
\newcommand{\bs}[1]{\boldsymbol{#1}}
\newcommand{\B}{\bs{B}}
\renewcommand{\b}{\bs{b}}
\renewcommand{\a}{\bs{a}}
\newcommand{\s}{\bs{s}}
\newcommand{\x}{\bs{x}}
\newcommand{\y}{\bs{y}}
\newcommand{\UU}{\mathcal{U}}
\newcommand{\CC}{\mathcal{C}}
\newcommand{\OO}{\mathcal{O}}
\newcommand{\II}{\mathcal{I}}
\newcommand{\GG}{\mathcal{G}}
\renewcommand{\EEE}{\mathscr{E}}
\newcommand{\Ng}{N^\GG}
\newcommand{\Ngstar}{N^{\GG*}}
\newcommand{\Ngp}{N^\GG_{post}}
\newcommand{\Nk}{N^{{\textrm{\tiny KLM}}}}
\newcommand{\Nkstar}{N^{\textrm{\tiny{KLM}*}}}
\newcommand{\Nkp}{N^{\textrm{\tiny KLM}}_{post}}
\newcommand{\est}[1]{\mathop{\EEE\left(#1\right)}}
\newcommand{\diag}{\mathop{\mathrm{diag}}}
\newcommand{\pr}[2][]{
	\mathop{
		\ifx &#1&
		\mathrm{Pr}
		\else
			\underset{#1}{\mathrm{Pr}}
		\fi
		\left[#2\right]}
}
\newcommand{\e}[2][]{
	\mathop{
		\ifx &#1&
			\mathbb{E}
		\else
			\underset{#1}{\mathbb{E}}
		\fi
		\left[#2\right]}
}
\newcommand{\ie}{\emph{i.e.}, }
\newcommand{\eg}{\emph{e.g.}, }
\newcommand{\per}[1]{\mathop{\mathsf{Per}\left(#1\right)}}
\newcommand{\abs}[1]{\left|#1\right|}
\newcommand{\bigo}[1]{\OO\left(#1\right)}
\newcommand{\toc}{\leadsto}
\title{Connecting quantum circuit amplitudes and matrix permanents through polynomials}
\author[1]{Hugo Thomas}
\email{hugo.thomas@quandela.com}
\author[1]{Pierre-Emmanuel Emeriau}
\author[1]{Rawad Mezher}
\affil[1]{Quandela SAS, 7 Rue Léonard de Vinci, 91300 Massy, France}
\begin{document}

\maketitle

\begin{abstract}
In this paper, we strengthen the connection between qubit-based quantum circuits and photonic quantum computation. Within the framework of circuit-based quantum computation, the sum-over-paths interpretation of quantum probability amplitudes leads to the emergence of sums of exponentiated polynomials. In contrast, the matrix permanent is a combinatorial object that plays a crucial role in photonic by describing the probability amplitudes of linear optical computations.
To connect the two, we introduce a general method to encode an $\mathbb F_2$-valued polynomial with complex coefficients into a graph, such that the permanent of the resulting graph's adjacency matrix corresponds directly to the amplitude associated the polynomial in the sum-over-path framework. 
This connection allows one to express quantum amplitudes arising from qubit-based circuits as permanents, which can naturally be estimated on a photonic quantum device. 

    % We develop a connection between qubit quantum circuits and photonic-based quantum computation
    % In the context of circuit-based quantum computation, sums of exponentiated
    % polynomials arise as a product of the \emph{sum-over-paths} interpretation
    % of quantum probability amplitudes. On the other hand, the matrix permanent
    % is a combinatorial object that finds use in photonic quantum computing as
    % the way to describe probability amplitudes and one can use a photonic quantum device to compute estimate of matrix permanents.
    % In this work, we propose a general method for encoding the clauses of a $\mathbb F_2$-valued polynomial with complex coefficients into graph gadgets such that the permanent of the adjacency matrix of the constructed graph is equal to the amplitude given rise to the polynomial. 
    % Our method therefore strengthens the connection between circuit-based and photonic quantum computation.
\end{abstract}

\section{Introduction}

Quantum computation is often pictured in the gate-based model where
computations are described by quantum logic gates. Quantum
computing (QC) based on qubits is predominant as its development has
mirrored that of classical computer science, which relies on bits as the
information carrier. 

Photonic quantum computing is a different yet promising avenue for
realising a quantum computer
\cite{maring_versatile_2024,bartolucci_fusion-based_2023}, since photonic
approaches are modular and flexible, and that photons hardly suffer from
decoherence. Photonic QC is not naturally described by qubit logic, though.
Rather, Aaronson and Arkhipov \cite{aaronson_computational_2011} showed that
output probability amplitudes of photons in a linear optical interferometer
(BosonSampling) can be expressed as permanents of sub-matrices of the linear
interferometer. More precisely, the quantum probability amplitudes resulting from a linear
optical computation on $m$ modes are -- for input and output occupancy $\bs s = (s_1, \cdots,
s_m)$ and $\bs t = (t_1, \cdots, t_m)$, where $s_i$ (resp. $t_i$) is the number
of photons in input (resp. output) mode $i$, and a unitary matrix $\mathcal U
\in SU(m)$ -- of the form 
\begin{equation}
  \braket{\bs t | \mathcal U |\bs u} \propto \per{\mathcal U_{\bs s, \bs t}}.
\end{equation}
Computing the matrix permanent is in all generality a so-called \#\P-hard problem -- that
is, a problem that amounts to counting the number of solutions to an \NP-hard problem; in
\cite{valiant_complexity_1979} it is shown that it is a $\#\P$-complete problem
for binary matrices. 

Sums of exponentiated polynomials\footnote{Also called \textit{exponential sums} in \cite{bu_classical_2022,cai_tractable_2010}.} appear in many domains ranging from number
theory \cite{bombieri_exponential_1966,korobov_weyls_1992} and computational
complexity theory \cite{stockmeyer_complexity_1983} to Feynman's interpretation
of quantum mechanics \cite{feynman_quantum_1966}. In \cite{dawson_quantum_2004},
the authors relied on a discretised version of Feynman's path-integral
formulation to show that computing quantum circuits probability amplitudes is
equivalent to compute such sums. More precisely, given a quantum circuit $\CC$
acting on $n$ qubits, the probability amplitude of a given input-output pair
$\bs a, \bs b \in \{0, 1\}^n$ is given by the sum
\begin{equation}
  \braket{\bs b | \CC | \bs a} \propto \sum_{\bs x \in \{0, 1\}^v} e^{i f(\bs x)},
\end{equation}
where $f$ is an $\mathbb F_2$-valued polynomial with complex coefficients and $v$
is the number of qubits in the circuit plus the number of gates creating
coherence in the circuit (\eg Hadamard gates). From this result follows that
computing quantum circuits probability amplitudes is also a \#\P-hard
function \cite{ehrenfeucht_computational_1990}.

Intrigued by the suggestion at the end of \cite{rudolph_simple_2009} regarding
the potential estimation of quantum circuits amplitude via classical techniques
coming from the linear optical picture
\cite{gurvits_complexity_2005,aaronson_generalizing_2012}, we aim to derive a
formal mapping between gate-based computation and linear optical computation for
amplitude estimation. The two seemingly different models of computation yields
quantum probability amplitudes the computation of which lie in the same
complexity class. This raises the question of the precise relationship between
these two models.

\paragraph{Our contributions.} From a $n$-variable polynomial $f$ with binary
variables and complex coefficients, we show how to construct a graph $G$ with
adjacency matrix $\mathcal G$ such that
\begin{equation}\label{eq:expSumPermanent}
  \sum_{x \in \{0, 1\}^n} e^{i f(\x)} = \per{\mathcal G},
\end{equation}
in time polynomial in the number of clauses of $f$. That is to say, the permanent
computes the sum of the exponential of a polynomial $f$ evaluated over all its
domain.  We go beyond already existing techniques
\cite{valiant_complexity_1979,rudolph_simple_2009,blaser_complexity_2012} by
providing a systematic recipe for constructing such a graph. To do so, each
clause of the polynomial is encoded onto a small graph, called a \emph{clause-gadget},
according to its degree. For a fixed degree, the graph gadget is parameterized
by the clause coefficient. While we find that designing a graph gadget is a
computationally hard problem\footnote{Although note that it is a compute once and store forever problem.}, we perform the computation for clauses of degree
up to 3 in \autoref{sec:example}. This encoding finds uses in quantum circuits
probability amplitude computation. It was shown in \cite{mezher_solving_2023} that it is possible to encode the
adjacency matrix of a graph $G(V,E)$ onto a unitary matrix $\mathcal U$ such
that 
\begin{equation*}
    \braket{\bs 1_n | \mathcal U |\bs 1_n} \propto \per{\mathcal G},
\end{equation*}
where $\ket{\bs 1_n} =\ket{1^{\otimes n}0^{\otimes{m-n}}}$. Thus, given the
graph $G_\CC(V,E)$ encoding the amplitude $\braket{\b | \CC | \a}$ of a quantum
circuit $\CC$,  for $\a, \b \in \{0, 1\}^n$, our encoding allows
one to construct a linear optical circuit $\mathcal V$ such that 
\begin{equation*}
    \braket{\bs 1_n | \mathcal V |\bs 1_n} \propto \braket{\bs b | \CC | \bs a }.
\end{equation*}
While Rudolph's encoding
\cite{rudolph_simple_2009} focuses on a particular gate set and does not provide
a generic recipe for designing suitable gadgets, we formalize the encoding and
provide a systematic recipe for constructing such gadgets.
%, so that no constraint on $f$ is anymore imposed. 
We do so by deriving a procedure that checks the existence of a gadget and
guarantees its construction in \autoref{pr:gadgetExistence}. We discuss
potential applications of our encoding in \autoref{sec:applications}.

\begin{figure*}[t]
    \centering
    \begin{subfigure}[t]{.33\textwidth}
      \centering
        \begin{tikzpicture}[->,>=stealth',shorten >=1pt,thick, scale=.7]
            \node[draw, circle] (1) at (0,0) {1};
            \node[draw, circle] (2) at (1,1.4) {2};
            \node[draw, circle] (3) at (2, 0) {3};
            \path (1) edge [out=265,in=215,looseness=8] node[below left] {a} (1);
            \path (2) edge [out=115,in=65,looseness=8] node[above] {b} (2);
            \draw[->, bend left = 10] (1) edge["d"] (2);
        \path (3) edge [out=340,in=290,looseness=8] node[below right] {c} (3);
            \draw[->, bend left = 10] (2) edge["e"] (3);
            \draw[->, bend left = 10] (3) edge["f"] (1);
        \end{tikzpicture}
      \caption{A 3-vertex graph denoted $G$.}
      \label{fig:coverCompleteGraph}
    \end{subfigure}\hfill%
    \begin{subfigure}[t]{.33\textwidth}
      \centering
        \begin{tikzpicture}[->,>=stealth',shorten >=1pt,thick, scale=.7]
            \node[draw, circle] (1) at (0,0) {1};
            \node[draw, circle] (2) at (1,1.4) {2};
            \node[draw, circle] (3) at (2, 0) {3};
        
            \path (1) edge [out=265,in=215,looseness=8] node[below left] {a} (1);
            \path (2) edge [out=115,in=65,looseness=8] node[above] {b} (2);
            % \draw[->, bend left = 10] (1) edge["d"] (2);
            \path (3) edge [out=340,in=290,looseness=8] node[below right] {c} (3);
            % \draw[->, bend left = 10] (2) edge["e"] (3);
            % \draw[->, bend left = 10] (3) edge["d"] (1);
        \end{tikzpicture}
      \caption{A cycle cover of $G$ of weight $abc$.}
      \label{fig:coverCompleteGraphCover1}
    \end{subfigure}\hfill%
    \begin{subfigure}[t]{.33\textwidth}
      \centering
        \begin{tikzpicture}[->,>=stealth',shorten >=1pt,thick, scale=.7]
            \node[draw, circle] (1) at (0,0) {1};
            \node[draw, circle] (2) at (1,1.4) {2};
            \node[draw, circle] (3) at (2, 0) {3};

            \node[] (4) at (0,-1.5) {};
        
            % \path (1) edge [out=265,in=215,looseness=8] node[below left] {a} (1);
            % \path (2) edge [out=115,in=65,looseness=8] node[above] {b} (2);
            \draw[->, bend left = 10] (1) edge["d"] (2);
            % \path (3) edge [out=340,in=290,looseness=8] node[below right] {c} (3);
            \draw[->, bend left = 10] (2) edge["e"] (3);
            \draw[->, bend left = 10] (3) edge["f"] (1);
        \end{tikzpicture}
      \caption{Another cycle cover of $G$ of weigth $def$.}
      \label{fig:coverCompleteGraphCover2}
    \end{subfigure}%
    \caption{The leftmost graph of \autoref{fig:coverCompleteGraph} is a graph
     with 3 vertices. The two graphs depicted in
     \autoref{fig:coverCompleteGraphCover1,fig:coverCompleteGraphCover2} are the
     two cycles covers graph of \autoref{fig:coverCompleteGraph} -- so that if
     $G$ denotes the leftmost graph, $\per{G} = abc + def$.}
    \label{fig:cycleCoverIllustration}
  \end{figure*}
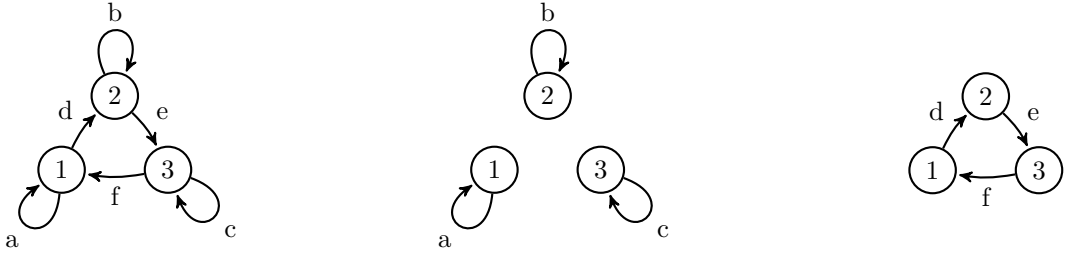

\section{Overview of the framework}\label{sec:framework}

Foremost, we start with a brief overview of the framework. We recall in
\autoref{app:linalg} the linear algebraic tools used throughout this work.

In \cite{rudolph_simple_2009}, the author shows how to encode a family of
specific polynomials that encode an amplitude of a quantum circuit onto a
graph, such that the permanent of the adjacency matrix of the graph equals the
amplitude. The technique relies on a seminal work from Valiant
\cite{valiant_complexity_1979}, where such a graph encoding is used to show that
computing the permanent of a binary matrix is a $\#\P$-complete problem.
Formally, in linear algebra, the permanent of a square matrix $A = (a_{i,j})_{0
\leq i,j\leq n}$ of order $n$, which we write $\per A$, is defined as the
following function of its entries \cite{marcus_permanents_1965}:
\begin{equation}\label{eq:defPermanent}
\per A = \sum_{\sigma \in S_n} \prod_{i = 1}^n a_{i, \sigma(i)},
\end{equation}
where the sum extends over all elements of the symmetric group $S_n$. Despite
its apparent similarities with the determinant, the essence of the permanent is
counting and it has, unlike the determinant, no known geometrical
interpretation. From a graph-theoretic point of view, the permanent of the
adjacency matrix of a directed graph -- \ie the direction of the edges matters
-- is equal to the sum of the weights of all \emph{cycle covers} of the graph
(defined shortly after). Aaronson proved that computing the permanent is
$\#\P$-hard in the general case using linear optics techniques
\cite{aaronson_linear-optical_2011}.

We use the notation $G(V,E)$ to denote a graph with vertex set $V$, and edge set
$E$, or $G$ for simplicity if these sets are irrelevant, and we denote its
adjacency matrix by $\mathcal G$, \ie the same symbol with calligraphic font.

We consider hereafter directed and weighted graphs. In such a graph, a
\emph{cycle} is a closed path. A \emph{cycle cover} is a set of vertex-disjoint
cycles that covers \emph{all} vertices, \ie no vertex is left alone. In all
generality, a graph admits several cycle covers and given one of them, its
weight is the product of the weights of the edges involved in that particular
cover. We show examples of graph cycle covers in
\autoref{fig:cycleCoverIllustration}.
 
Providing a graph $G(V,E)$ satisfying \autoref{eq:expSumPermanent} is achieved
by encoding each monomial (also identified as clauses) of the polynomial $f$
onto a small graph, the so-called \emph{clause gadgets}, and those gadgets are
linked together with cycles representing the variables they share to form the
graph. From the perspective of the graph, the permanent of its adjacency matrix
equals the sum of the weights of all its cycle covers. The permanent of the
graph relates to evaluating the polynomial as follows. In a particular cycle
cover, the absence of a cycle corresponding to a variable corresponds to the
fact that the variable is assigned the value 1, and its presence corresponds to
its assignment to 0. We hereafter use the following coloring convention: the
blue cycles encode the variables while linear, quadratic and cubic clauses are
respectively represented by red, green and purple graph-gadgets.

For the sake of completeness, we first explain how exponential sums may appear
in quantum computation. In \cite{dawson_quantum_2004} a method -- known as the
\emph{sum-over-paths} formalism -- is given to encode the quantum amplitudes of
\{Hadamard, Toffoli\} circuits as the gap of a low-degree polynomial over
$\mathbb F_2$. Precisely, for a $q$-qubit circuit $\CC$ comprising $h$ Hadamard
gates, it is shown how to construct a polynomial ${f \in \mathbb F_2[x_1,\cdots,
x_n]}$ with $n = q + h$ variables such that
\begin{equation}\label{eq:amplitudeSumValidAssigment}
    \braket{\b | \CC | \a}
        = \sum_{\substack{\x \in \{0, 1\}^{n}}}\frac{(-1)^{f(\x)}}{\sqrt 2^h},
        % = \frac{\gap{f}}{\sqrt 2^h},
\end{equation}
where $\ket \a$ and $\ket \b$ are arbitrary input and output states, by applying
the following procedure. In the rest, we will refer to $\braket{\b | \CC | \a}$
as the \emph{amplitude of interest} with the multiset notation $\x = (x_1,
\cdots, x_n)$, and consider the $a_i$'s and $b_j$'s as input and output
variables respectively. The $a_i$'s travel along the their respective qubit
line. Whenever a variable $x$ goes through a Hadamard gate, a new one is created
(\autoref{fig:hGate}), while when a variable $z$ traverses the target of a
Toffoli gate, no variable is created but it is replaced by $z \oplus xy$ where
$x$, $y$ are the variables living on the two control lines that remain unchanged
(\autoref{fig:toffoliGate}).
\begin{figure}[hbtp!]
    \centering
    \begin{subfigure}{.5\linewidth}
      \centering
      \begin{quantikz}
        \\
        \lstick{$x$} & \gate{H} & \qw \rstick{$y$} \\
        \end{quantikz}
      \caption{Hadamard gate.}
      \label{fig:hGate}
    \end{subfigure}\hfill%
    \begin{subfigure}{.5\linewidth}
      \centering
      \begin{quantikz}
        \lstick{$x$} & \ctrl{1} & \qw \rstick{$x$}\\
        \lstick{$y$} & \ctrl{1} & \qw \rstick{$y$}\\
        \lstick{$z$} & \targ{} & \qw \rstick{$z \oplus xy$}
        \end{quantikz}
      \caption{Toffoli gate.}
      \label{fig:toffoliGate}
    \end{subfigure}
    \caption{Rules to follow to act on the variables.}
    \label{fig:actionOnVariables}
\end{figure}
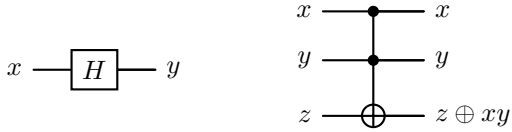
The output bit values $B_j(\x)$ are polynomials over $\mathbb F_2[x_1, \cdots,
x_n]$, where $n$ is the total number of path variables. Any assignment of $\x$
is valid if it satisfies the condition ${B_j(\x) = b_j}$ for all $j$, which we
write $\B(\x)= \b$. Once all the path variables are created, the circuit is
associated to the polynomial constructed by taking the sum over $\mathbb F_2$ of
the product of every pairs of variables on either side of a Hadamard gate,
namely
$$
f(x) = \sum_{\substack{\text{Hadamard}\\\text{gates $h$}}}
          (\text{input var. of $h$})(\text{output var. of $h$}) \,,
$$
as this generalizes to all Hadamard gates the identity 
\begin{equation}
    \braket{x | H | y} = \frac{1}{\sqrt 2}e^{i \pi xy}.
\end{equation}
Taking the sum over all possible assignments of the variables gives the
exponential sum of \autoref{eq:amplitudeSumValidAssigment}.
\begin{figure}[ht!]
    \centering
    \input{img/ht_circuit_example.tex}
    \caption{ Instance of $\{$Hadamard, Toffoli$\}$ circuit, with superimposed
      variables following the rules of \autoref{fig:actionOnVariables}. The
      vector of variables is ${\x = (x_0, \cdots, x_{12})}$. The input state is
      $\ket{\a} = \ket{x_0, x_1, x_2}$ and the vector of conditions is ${\B(\x)
      = (x_6, x_9, x_8)^T}$. The polynomial associated to the circuit is $f(\x)
      = x_0x_3 \oplus x_1x_4 \oplus x_2x_5 \oplus x_3x_6 \oplus x_4x_7 \oplus
      x_3x_5x_7 \oplus x_5x_8 \oplus x_7x_9$. The pair of Hadamard gates that
      create the variables $x_7$ and $x_9$ amounts to the identity, but enforce
      the polynomial to have degree (at most) 3. }
    \label{fig:HTCircuitExample}
\end{figure}
From this quantum amplitude encoding to polynomial, the contribution of
\cite{rudolph_simple_2009} is a procedure to construct a graph $G$ such that
\begin{equation}\label{eq:amplitudeFromPermanent}
\braket{\b | \CC | \a} = \frac{\per{\mathcal G}}{\sqrt 2^h},
\end{equation}
for circuit $\CC$ built upon the same gate set. Denote by $\CC$ the circuit of
\autoref{fig:HTCircuitExample}, the graph encoding of $\braket{111|\CC|000}$ is
illustrated in \autoref{fig:HTGraphExample}. 
\begin{figure}[ht!]
  \centering
  \begin{tikzpicture}[
    scale=1
]
   \tikzset{dot/.style={circle,draw=, fill=#1}}% inner sep=0,minimum size=4pt
   \tikzset{arr/.style={color=cyan, ultra thick, -> = stealth,shorten > = .5pt,}}
   \tikzset{rec/.style={rectangle,draw, ultra thick, color=darkgreen, fill=darkgreen!20, text = black, rounded corners=0.05cm, inner sep=5pt}}

   \node[rec] (x2x5) at (0,0) {$x_2x_5$};
   \node[rec] (x1x4) at (0,1) {$x_1x_4$};
   \node[rec] (x0x3) at (0,2) {$x_0x_3$};
   \node[rec] (x5x8) at (4.5,0) {$x_5x_8$};
   \node[rec] (x7x9) at (4.5,1) {$x_7x_9$};
   \node[rec] (x3x6) at (4.5,2) {$x_3x_6$};
   \node[rec] (x4x7) at (2,1) {$x_4x_7$};

   % Draw the gadget dots on top of the arrows
   \node[dot=darkgreen] (x2x5east) at (x2x5.east) {};
   \node[dot=darkgreen] (x2x5west) at (x2x5.west) {};
   \node[dot=darkgreen] (x1x4east) at (x1x4.east) {};
   \node[dot=darkgreen] (x1x4west) at (x1x4.west) {};
   \node[dot=darkgreen] (x0x3east) at (x0x3.east) {};
   \node[dot=darkgreen] (x0x3west) at (x0x3.west) {};
   \node[dot=darkgreen] (x5x8east) at (x5x8.east) {};
   \node[dot=darkgreen] (x5x8west) at (x5x8.west) {};
   \node[dot=darkgreen] (x7x9east) at (x7x9.east) {};
   \node[dot=darkgreen] (x7x9west) at (x7x9.west) {};
   \node[dot=darkgreen] (x3x6east) at (x3x6.east) {};
   \node[dot=darkgreen] (x3x6west) at (x3x6.west) {};
   \node[dot=darkgreen] (x4x7east) at (x4x7.east) {};
   \node[dot=darkgreen] (x4x7west) at (x4x7.west) {};

   \path[arr] (x2x5east) edge [bend right=20]  (x5x8west);
   \path[arr] (x5x8west) edge [bend right=20]  (x2x5east);

   \path[arr] (x1x4east) edge [bend right=40]  (x4x7west);
   \path[arr] (x4x7west) edge [bend right=40]  (x1x4east);

   \path[arr] (x4x7east) edge [bend right=40]  (x7x9west);
   \path[arr] (x7x9west) edge [bend right=40]  (x4x7east);

   \path[arr] (x0x3east) edge [bend right=20]  (x3x6west);
   \path[arr] (x3x6west) edge [bend right=20]  (x0x3east);

   % Cubic clause gadget
   \node[dot=violet] at (3.5, 1.2) (cx7) {};
   \node[dot=violet] at (2.9, 1.72) (cx3) {};
   \node[dot=violet] at (2.9, 0.3) (cx5) {};

   \begin{scope}[on background layer]  
    \fill [violet!20] (cx3.center) -- (cx7.center) -- (cx5.center) -- cycle;
   \end{scope}
   \draw[color=violet, ultra thick] (cx3) -- (cx5);
   \draw[color=violet, ultra thick] (cx7) -- (cx5);
   \draw[color=violet, ultra thick] (cx3) -- (cx7);
   \node[label={[rotate=55]\footnotesize$x_3x_5x_7$}] at (3.25, .82) () {};

   % Encoding zeros in input state
   \node[dot=white] at (-1.5, 0) (zx2) {};
   \node[dot=white] at (-1.5, 1) (zx1) {};
   \node[dot=white] at (-1.5, 2) (zx0) {};

   \path[arr] (x0x3west) edge [bend right=40]  (zx0);
   \path[arr] (zx0) edge [bend right=40]  (x0x3west);

   \path[arr] (x1x4west) edge [bend right=40]  (zx1);
   \path[arr] (zx1) edge [bend right=40]  (x1x4west);
   \path[arr] (x2x5west) edge [bend right=40]  (zx2);
   \path[arr] (zx2) edge [bend right=40]  (x2x5west);

\end{tikzpicture}
  \caption{Graph corresponding to the circuit of \autoref{fig:HTCircuitExample},
  encoding input and output states $\ket{000}$ and $\ket{111}$ respectively.
  Recall that the circuit was associated to the polynomial $f(\x) = x_0x_3
  \oplus x_1x_4 \oplus x_2x_5 \oplus x_3x_6 \oplus x_4x_7 \oplus x_3x_5x_7
  \oplus x_5x_8 \oplus x_7x_9$. The green squares (resp. red triangles)
  represent the quadratic (resp. cubic) clause gadgets. The blue cycles
  correspond to the variables. We denote by $G$ the above graph and $\CC$
  the circuit of \autoref{fig:HTCircuitExample}, then it holds that
  $\braket{111|\CC|000}= \frac{\per{\mathcal G}}{8 \sqrt 2}$.}
  \label{fig:HTGraphExample}
\end{figure}
This encoding, however, imposes strong constraints on the structure of the
polynomial. Namely, the polynomial must be of degree at most 3, and the
variables that appear in a cubic clause must appear in exactly two quadratic
clauses. This constraint is satisfied by adding redundant pairs of Hamadard gate which, by construction, bound the degree of the polynomial. Hence, the proposal of \cite{rudolph_simple_2009} is tailored to encode
the polynomials giving the quantum amplitudes of circuits built upon the
Hadamard and Toffoli gates.

In the next section, we describe how to construct suitable clause gadgets for
polynomials without constraining their structure.

\section{General method for deriving clause gadgets}
\label{sec:method}

We encode polynomials of the form 
\begin{equation}\label{eq:polynomialForm}
    f(x_1, \cdots, x_n) = \sum_{S \subseteq [n]}
\theta_S \prod_{i \in S} x_i,
\end{equation}
with boolean variables $x_i$ and complex coefficients $\theta_S$ onto graphs. To
do so, each clause $\theta_S \prod_{i \in S} x_i$ is associated to a
graph-gadget, since using the identity
\begin{equation}\label{eq:exponentialRule}
    \sum_{\x} e^{i \sum_{S \subset [n]} \theta_S \prod_{i \in S} x_i} 
    = \sum_{\x} \prod_{S \subset [n]} e^{i \theta_S \prod_{i \in S} x_i},
\end{equation}
we find that we can treat each clause separately: observe that the left-hand side of
\autoref{eq:exponentialRule} has the same sum-of-products form as the permanent
of \autoref{eq:defPermanent}.

First in \cite{valiant_complexity_1979}, and later in \cite{rudolph_simple_2009}, was
shown that a gadget must satisfy the two following simple observations to be a
valid one:
\begin{observation}[\emph{nonzero} contributions]\label{ob:nonzero} If a cycle
cover is consistent with an assignment of the variables, then the weight of that
cycle cover must be the value of $f$ evaluated for that assignment of the
variables (see \autoref{fig:validCover}).
\end{observation}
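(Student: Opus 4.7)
My plan is to derive the observation by decomposing the weight of any consistent cycle cover into local contributions -- one per variable cycle and one per clause-gadget -- and then identifying those contributions, via the identity in \autoref{eq:exponentialRule}, with the single exponential $e^{if(\x)}$ associated to $f$ at $\x$.

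First I would spell out the block structure of $G$. By construction, $G$ is the edge-disjoint union of $n$ blue variable cycles $\{V_i\}_{i=1}^n$ -- each equipped with designated attachment vertices -- and of clause-gadgets $\{G_S\}_S$, one per monomial of $f$, where each $G_S$ shares only its attachment vertices with the cycles $\{V_i\}_{i \in S}$ and is otherwise disjoint from every other gadget. I would then fix the definition of consistency of a cycle cover $\pi$ of $G$ with an assignment $\x \in \{0,1\}^n$: for every $i$, the local configuration of $\pi$ on $V_i$ realises the convention (cycle present iff $x_i = 0$), and on every $G_S$ the internal use of the attachment vertices is compatible with the restriction $\x|_S$.

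The key step is a multiplicative factorisation of $w(\pi)$. Because variable cycles and clause-gadgets meet only at attachment vertices whose occupancy is pinned by $\x$, the restriction $\pi \cap V_i$ is the unique local configuration on $V_i$ dictated by $x_i$, and $\pi \cap G_S$ is a legitimate local cycle cover of $G_S$ under the boundary data $\x|_S$. Multiplicativity of edge weights along disjoint cycles then yields
\begin{equation*}
    w(\pi) \;=\; \prod_{i=1}^n w\bigl(\pi \cap V_i\bigr) \cdot \prod_S w\bigl(\pi \cap G_S\bigr).
\end{equation*}
Under the design convention that variable cycles carry unit weight the first product collapses to $1$, and under the per-gadget specification that \emph{every} sub-cover of $G_S$ consistent with $\x|_S$ carries weight exactly $e^{i \theta_S \prod_{i \in S} x_i}$, the second product becomes $\prod_S e^{i \theta_S \prod_{i \in S} x_i}$. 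By \autoref{eq:exponentialRule} this equals $e^{i f(\x)}$, which is precisely the value attached to $f$ at the assignment $\x$ on the left-hand side of \autoref{eq:expSumPermanent}. Crucially, the argument is applied to an arbitrary consistent $\pi$ and nowhere averages over a set of covers, so it yields the per-cover conclusion required by the observation.

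The main obstacle is not the factorisation, which is bookkeeping once the block structure of $G$ is fixed, but the per-gadget specification invoked in the last step: it is not automatic that for every coefficient $\theta_S \in \mathbb{C}$ and every clause degree $|S|$ one can design a small graph whose \emph{every} consistent sub-cover carries the prescribed complex weight $e^{i \theta_S \prod_{i \in S} x_i}$. Settling that existence question, and producing explicit gadgets in the low-degree cases, is the content of \autoref{pr:gadgetExistence} and \autoref{sec:example}; in that light, \autoref{ob:nonzero} is best read as the clean statement of the design constraint that any valid clause-gadget must satisfy in order for the factorisation above to terminate at $e^{if(\x)}$.
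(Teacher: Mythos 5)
The paper contains no proof of \autoref{ob:nonzero}: it is stated as a design requirement inherited from \cite{valiant_complexity_1979,rudolph_simple_2009}, and its formal content only appears later, in the \emph{nonzero contribution} paragraph of \autoref{sec:gateGadgetDesign} and in \autoref{alg:generation}, where it is encoded as the permanent equations $\per{\mathcal A'} = e^{im}$ on induced submatrices. Your reading of the observation as a constraint that any valid gadget must satisfy, with the existence question deferred to \autoref{pr:gadgetExistence} and \autoref{sec:example}, is exactly right, and your block factorisation of a cover's weight across variable cycles and clause-gadgets is essentially the soundness argument the paper leaves implicit.

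There is, however, one step that fails as written: the per-gadget specification you invoke --- that \emph{every} sub-cover of $G_S$ consistent with $\x|_S$ carries weight exactly $e^{i\theta_S\prod_{i\in S}x_i}$ --- is strictly stronger than what the paper requires, and it is violated by the paper's own gadgets. \autoref{alg:generation} fixes only the \emph{sum} of the weights of the consistent local covers, i.e.\ the permanent of the submatrix induced by the inner vertices and the outer vertices assigned $1$. Concretely, for $A_2(\theta)$ under the assignment $x=y=1$ there are four consistent local covers, of weights $\tfrac{1}{2}(1-e^{i\theta})$, $\tfrac{1}{2}(1+e^{i\theta})$, $\tfrac{1}{2}(e^{i\theta}-1)$ and $\tfrac{1}{2}(e^{i\theta}-1)$: none equals $e^{i\theta}$, but they sum to it. So your per-cover conclusion does not hold cover by cover; what the construction guarantees, and what is actually needed for \autoref{eq:expSumPermanent}, is the aggregate statement that for each assignment $\x$ the \emph{total} weight of all covers consistent with $\x$ equals $e^{if(\x)}$. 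Your factorisation repairs immediately at that level: for fixed $\x$ the consistent covers of $G$ are exactly the independent choices of one consistent local cover per gadget, so the summed weight is the product of the sub-permanents, which by the gadget constraints and \autoref{eq:exponentialRule} equals $e^{if(\x)}$. In this light the per-cover phrasing of \autoref{ob:nonzero} itself is an idealisation (suggested by \autoref{fig:validCover}, where the consistent cover happens to be unique) that the paper's formalisation deliberately relaxes to a permanent condition; a faithful proof should be run on summed contributions per assignment rather than on individual covers.
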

\begin{observation}[\emph{zero} contributions]\label{ob:zero}
    The inconsistent cycle covers must have an overall weight of zero (see \autoref{fig:quadraticGadgetsCrossingPaths}).
\end{observation}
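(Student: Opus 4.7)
The plan is to derive both observations as necessary conditions that the graph $G$ must satisfy in order for the target identity \autoref{eq:expSumPermanent} to hold, by expanding the permanent via its cycle-cover interpretation. I would start from \autoref{eq:defPermanent} together with the fact, recalled in \autoref{sec:framework}, that $\per{\mathcal G}$ equals the sum over cycle covers of $G$ of the products of their edge weights, and then group these covers so as to match the grouping of the exponential sum on the left-hand side of \autoref{eq:expSumPermanent}.

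The key step is to fix a partition of the cycle covers according to the colouring convention of \autoref{sec:framework}: every cycle cover $C$ either induces a Boolean assignment $\x(C) \in \{0,1\}^n$ -- by setting $x_i = 0$ whenever the blue loop for $x_i$ appears in $C$ and $x_i = 1$ otherwise, provided each clause gadget touching $x_i$ is in a local configuration compatible with this choice -- or it is declared \emph{inconsistent}. Writing $\mathcal C_{\x}$ for the set of consistent cycle covers associated with $\x$, $\mathcal C_{\emptyset}$ for the inconsistent ones, and $w(C)$ for the product of edge weights of $C$, the permanent splits as
\begin{equation*}
\per{\mathcal G} = \sum_{\x \in \{0,1\}^n} \sum_{C \in \mathcal C_{\x}} w(C) \;+\; \sum_{C \in \mathcal C_{\emptyset}} w(C).
\end{equation*}
Matching this expression against $\sum_{\x} e^{i f(\x)}$ on an assignment-by-assignment basis forces the inner consistent sum to equal $e^{i f(\x)}$, which is precisely \autoref{ob:nonzero}; once that has been imposed, the residual contribution from $\mathcal C_{\emptyset}$ must vanish identically, which is exactly the statement of \autoref{ob:zero}.

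The main obstacle I anticipate is not the algebraic manipulation but the careful setup of ``consistency'' so that the partition is well defined globally. Because a variable $x_i$ is shared across several clause gadgets, a single global decision about the presence or absence of its blue loop has to propagate coherently into every gadget touching $x_i$, and this must be encoded in the definition of $\mathcal C_{\x}$ through the edges connecting the variable cycles to the gadgets. Once this bookkeeping is pinned down, both observations follow immediately as necessary conditions; the substantive difficulty of the paper lies in actually constructing clause gadgets that realise them, not in verifying that they are required.
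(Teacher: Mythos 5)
Your proposal is correct and matches the paper's own treatment: the paper states \autoref{ob:zero} as a design requirement justified by exactly the decomposition you describe, namely splitting $\per{\mathcal G}$ over cycle covers into consistent covers (matched assignment-by-assignment to $e^{if(\x)}$, which is \autoref{ob:nonzero}) and inconsistent ones, whose total contribution must then vanish. You also correctly identify that the real work lies downstream, where the paper refines this global vanishing condition into the gadget-local constraints of \autoref{eq:isZeroAndMustBeZero}, accounting for the instance-dependent factors that multiply each partial cycle cover when the gadget is embedded in a larger graph.
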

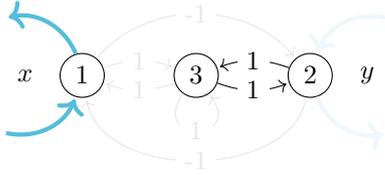
\begin{figure}[t!]
    \centering%
    % \begin{subfigure}[t]{.45\textwidth}%
    %     \centering%
        \begin{tikzpicture}[
    -> = stealth, % arrow head style
    shorten > = 1pt, % don't touch arrow head to node
    % auto,
    % node distance = 3cm, % distance between nodes
    thin, % line style
    scale=1.5,
    every node/.append style={fill=white},
]

% rectangle
% \draw[draw=darkgreen, ultra thick] (0, -.91) rectangle ++(2, 1.61);
% \node[rectangle,
%         rounded corners=0.05cm,
%         draw=darkgreen,
%         fill = darkgreen!20,
%     	minimum width = 3.8cm,
%     	minimum height = 2.4cm
%      ] (r) at (1,0) {};

% Main gadget
\node[draw, circle, inner sep = 0.7ex] at (0, 0) (1) {1};
\node[draw, circle, inner sep = 0.7ex] at (2, 0) (2) {2};
\node[draw, circle, inner sep = 0.7ex] at (1, 0) (3) {3};

\path[->, color=black!10] (1) edge [bend left=50] node [pos=0.5] {-1} (2);
\path[->, color=black!10] (2) edge [bend left=80] node {-1} (1);

\path[->, color=black!10] (1) edge [bend left=20] node {1} (3);
\path[->, color=black!10] (3) edge [bend left=20] node {1} (1);

\path[->, color=black] (2) edge [bend right=20] node {1} (3);
\path[->, color=black] (3) edge [bend right=20] node {1} (2);

\draw[color=black!10] (3) to [out=235, in=-55, looseness=6] node[] {1} (3);

\node[inner sep = 0.7ex] at (-0.5, 0) (20) {$x$};
\node[inner sep = 0.7ex] at (2.5, 0) (21) {$y$};

% Left outer link
\node[] at (-0.75, 0.5) (4) {};
\node[] at (-0.75, -0.5) (5) {};
\path[color=ultramarine, ultra thick] (1) edge [bend right=40]  (4);
\path[color=ultramarine, ultra thick] (5) edge [bend right=40]  (1);

% Left outer link
\node[] at (2.75, 0.5) (6) {};
\node[] at (2.75, -0.5) (7) {};
\path[color=ultramarine!10, ultra thick] (2) edge [bend right=40]  (7);
\path[color=ultramarine!10, ultra thick] (6) edge [bend right=40]  (2);
\end{tikzpicture}
        \caption{\emph{Valid} cycle cover (\autoref{ob:nonzero}) of the gadget
        representing the quadratic clauses of the polynomial from
        \cite{rudolph_simple_2009}. Here is illustrated the gadget $A_2(\pi)$
        (see \autoref{sec:example}). Say vertex 1 is connected to the variable
        $x$ and 2 to $y$, then the cycle cover assigns $x = 0$ and $y = 1$, and
        the contribution of the gadget is $(-1)^{xy} = 1$.}
        \label{fig:validCover}
\end{figure}
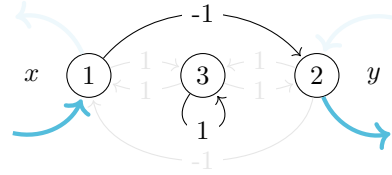
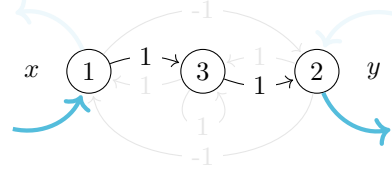
\begin{figure}[t!]
    \centering
    \begin{subfigure}{.45\textwidth}
        \centering
        \begin{tikzpicture}[
    -> = stealth, % arrow head style
    shorten > = 1pt, % don't touch arrow head to node
    % auto,
    % node distance = 3cm, % distance between nodes
    thin, % line style
    scale=1.5,
    every node/.append style={fill=white}
]

% rectangle
% \draw[draw=lime, ultra thick] (0, -.91) rectangle ++(2, 1.61);

% Main gadget
\node[draw, circle, inner sep = 0.7ex] at (0, 0) (1) {1};
\node[draw, circle, inner sep = 0.7ex] at (2, 0) (2) {2};
\node[draw, circle, inner sep = 0.7ex] at (1, 0) (3) {3};

\path[->] (1) edge [bend left=50] node [pos=0.5] {-1} (2);
\path[->, color=black!10] (2) edge [bend left=80] node {-1} (1);

\path[->, color=black!10] (1) edge [bend left=20] node {1} (3);
\path[->, color=black!10] (3) edge [bend left=20] node {1} (1);

\path[->, color=black!10] (2) edge [bend right=20] node {1} (3);
\path[->, color=black!10] (3) edge [bend right=20] node {1} (2);

\draw[] (3) to [out=235, in=-55, looseness=6] node[] {1} (3);

\node[inner sep = 0.7ex] at (-0.5, 0) (20) {$x$};
\node[inner sep = 0.7ex] at (2.5, 0) (21) {$y$};

% Left outer link
\node[] at (-0.75, 0.5) (4) {};
\node[] at (-0.75, -0.5) (5) {};
\path[color=ultramarine!10, ultra thick] (1) edge [bend right=40]  (4);
\path[color=ultramarine, ultra thick] (5) edge [bend right=40]  (1);

% Left outer link
\node[] at (2.75, 0.5) (6) {};
\node[] at (2.75, -0.5) (7) {};
\path[color=ultramarine, ultra thick] (2) edge [bend right=40]  (7);
\path[color=ultramarine!10, ultra thick] (6) edge [bend right=40]  (2);
\end{tikzpicture}
        \caption{Part of a crossing cycle.}
        \label{fig:crossPath1}
    \end{subfigure}\hfill%
    \begin{subfigure}{.45\textwidth}
        \centering
        \begin{tikzpicture}[
    -> = stealth, % arrow head style
    shorten > = 1pt, % don't touch arrow head to node
    % auto,
    % node distance = 3cm, % distance between nodes
    thin, % line style
    scale=1.5,
    every node/.append style={fill=white}
]

% rectangle
% \draw[draw=lime, ultra thick] (0, -.91) rectangle ++(2, 1.61);

% Main gadget
\node[draw, circle, inner sep = 0.7ex] at (0, 0) (1) {1};
\node[draw, circle, inner sep = 0.7ex] at (2, 0) (2) {2};
\node[draw, circle, inner sep = 0.7ex] at (1, 0) (3) {3};

\path[->, color=black!10] (1) edge [bend left=50] node [pos=0.5]{-1} (2);
\path[->, color=black!10] (2) edge [bend left=80] node {-1} (1);

\path[->] (1) edge [bend left=20] node {1} (3);
\path[->, color=black!10] (3) edge [bend left=20] node {1} (1);

\path[->, color=black!10] (2) edge [bend right=20] node {1} (3);
\path[->] (3) edge [bend right=20] node {1} (2);

\draw[color=black!10] (3) to [out=235, in=-55, looseness=6] node[] {1} (3);

\node[inner sep = 0.7ex] at (-0.5, 0) (20) {$x$};
\node[inner sep = 0.7ex] at (2.5, 0) (21) {$y$};

% Left outer link
\node[] at (-0.75, 0.5) (4) {};
\node[] at (-0.75, -0.5) (5) {};
\path[color=ultramarine!10, ultra thick] (1) edge [bend right=40]  (4);
\path[color=ultramarine, ultra thick] (5) edge [bend right=40]  (1);

% Left outer link
\node[] at (2.75, 0.5) (6) {};
\node[] at (2.75, -0.5) (7) {};
\path[color=ultramarine, ultra thick] (2) edge [bend right=40]  (7);
\path[color=ultramarine!10, ultra thick] (6) edge [bend right=40]  (2);
\end{tikzpicture}
        \caption{Part of another crossing cycle.}
        \label{fig:crossPath2}
    \end{subfigure}
    \caption{Invalid cycle cover: the cycles representing the variables are
    partially covered. The gadget design must be so that each crossing path can
    be paired with another with contribution of opposite sign to the cycle cover
    in a way which causes the necessary cancellation (\autoref{ob:zero}). In
    this instance, the contribution depicted in \autoref{fig:crossPath1} is
    cancelled by the one shown in \autoref{fig:crossPath2}.}
    \label{fig:quadraticGadgetsCrossingPaths}
\end{figure}

The clause gadget of a clause of degree $d$ is a graph with at least $d$
vertices, and the two above observations can be satisfied by adding a certain
number of additional \emph{inner} vertices if necessary.
\autoref{alg:generation} describes the procedure to explicitly write down all
the constraints a particular clause gadget must satisfy as a system of
polynomial equations.

Questions remain open: first and foremost, how do we find a solution to that
system? What is the minimal size of the graph associated to a given clause? To
answer these two questions, we extensively rely on Gröbner basis theory
\cite{cox_ideals_2013}. 

Designing clause gadgets involves solving a system of matrix permanents, whose
size depends on the degree of the clause to be encoded. State-of-the-art
algorithms for computing Gröbner bases, such as F5 \cite{faugere_new_2002}, have
exponential time complexity \cite{bardet_complexity_2015}. We believe that our
method is the most efficient way to design a gadget, as we draw up the minimal
set of constraints that a gadget must satisfy to be valid. Hence, we conjecture
that designing a suitable clause gadget is a computationally hard problem, the
requirement of computing permanents cannot be removed, as at least the permanent
of the gadget must equal the encoded amplitude.

\subsection{Clause gadget design}\label{sec:gateGadgetDesign}

This section is intended to explain how the methods of
\cite{valiant_complexity_1979,rudolph_simple_2009,blaser_complexity_2012} can be
formalized and how the finding of a clause-gadget can be systematized. The two
\autoref{ob:nonzero,ob:zero} of \autoref{sec:method} are suitably described
mathematically employing computational algebraic geometry techniques. Given a
gadget, we denote by \emph{outer vertices} all the vertices connected to the
rest of the graph, \ie the vertices associated to the variables, and by
\emph{inner vertices} the vertices only connected within the gadget. The inner
vertices will account for additional degrees of freedom in the weights the cycle
covers of the gadget so that all the constraints (see the rest of the section)
are satisfied. For instance, in
\autoref{fig:quadraticGadgetsCrossingPaths,fig:validCover}, the \emph{outer
vertices} are the vertices $1$ and $2$, and the vertex $3$ is the only
\emph{inner vertex}. We now describe how \autoref{ob:nonzero,ob:zero} translate
into a system of equations, a solution of which is the gadget we
seek. Say one wants to represent a clause of degree $d$, and requires $k$ inner
degrees of freedom, \ie $k$ additional inner vertices that allow to have
\emph{enough} possible paths to fulfill all conditions. The gadget is a graph
with $k+d$ vertices, $d$ of which are outer vertices and $k$ are inner vertices.
The adjacency matrix of the graph is thus of order $k+d$.
% ($|\OO| = d$ and $|\II| = k$.)

\paragraph{Nonzero contribution. (\autoref{ob:nonzero})} We consider a cycle
cover consistent with an assignment of the variables. Recall that each boolean
variable $x_i$ is associated to a (blue) cycle in the graph, and that if that
a particular cycle is present in the cycle cover then this matches the variable
assignment $x_i = 0$ and if it is not then $x_i = 1$. For a gadget representing
a clause of degree $d$ we write $\s$ the bit-string representing one of the
$2^d$ possible assignments of the variables. The polynomial equation associated
to $\s$ is the permanent of the graph induced by all inner vertices and the
outer vertices that are indexed by 1 in $\s$, or equivalently the permanent of
the adjacency matrix where all rows and columns indexed by 0 in $\s$ are
removed. The polynomial must equate to $1$ whenever $f = 0$, and $e^{i \theta}$
for a constant $\theta \in \mathbb C$ otherwise. This naturally induces $2^d$
equations.

\paragraph{Zero contribution. (\autoref{ob:zero})} We consider now a cycle cover
inconsistent with an assignment of the variables. When looking only at a
traversed gadget, a \emph{partial cycle cover} is a cycle cover with a path entering the gadget
in vertex $x$ and exiting in vertex $y$ for $x$ different from $y$. We write
such a partial cycle cover $x \toc y$ and the set of all
cycle covers of this form $\mathscr C_{x,y}$. A path $x \to y$ (that can traverse any inner
vertices and possibly outer vertices) is called a \emph{crossing path} of the
partial cycle cover. \autoref{fig:quadraticGadgetsCrossingPaths} illustrates two
partial cycle covers of the form $1\toc 2$. The term \emph{partial cycle cover}
may be misleading; the partial cycle covers are indeed actual cycle covers, but they are
partial from the perspective of a single gadget, in the sense that they do not
cover the whole gadget. Incidentally, the cycle covers that correspond to a
valid assignment of the variables are those that are not \emph{partial} for any
gadget of the final graph. This comes from observing that, if a cover partially covers a gadget,
then the cycles representing the variable connected to the gadget must also be
partially covered (see \autoref{fig:quadraticGadgetsCrossingPaths}) and
consequently, that cover does not correspond to a valid assignment of the
variables. A cycle cover can been seen as a set of oriented pairs of edges representing the arcs (e.g. $\{\{1,1\},\{2,2\},\{3,3\}\}$ for \autoref{fig:coverCompleteGraphCover1}), hence inclusion is well
defined. For a partial cycle cover $x \toc y$, we denote by $\omega(x \toc y)$
the weight of the cycle cover, \ie the product of the weights of the edges
involved in it. From the above, having captured the notion of inconsistency by partial covers, we can now rephrase \autoref{ob:zero} as a more formal statement:

\begin{center}
     \textit{A graph gadget $A(V_A, E_A)$ is valid if the sum of the weights
of all its partial cycle covers is zero. }
\end{center}

\noindent Let us denote by $\OO_A \subseteq V_A$
the set of outer vertices of $A$. From the perspective of a whole graph $G(V_G,
E_G)$ (in which $A$ is \emph{included}, \ie $V_A \subset V_G$ and $E_A
\subset E_G$), the contribution of the partial cycle covers of $A$ is zero if
and only if
\begin{equation}\label{eq:zeroContribution}
    % \begin{aligned}
        \sum_{
        \substack{(x, y) \in \OO_A \times \OO_A
            \\ x \neq y
        }}
            \left(
                \sum_{\tilde c \in \mathscr C_{(x,y)}} \omega(\tilde c) \per{\mathcal G_{\tilde c}}
            \right) = 0,
    % \end{aligned}
\end{equation}
where $G_{\tilde c}$ is some graph build upon the graph induced by the edges in
the cycle cover $\tilde c$. Here, the tilde is used to denote partial cycle covers.
Designing a valid gadget therefore implies to find the suitable weights of the
edges such that the above condition is satisfied. On the one hand, imposing
$\omega(\tilde c) = 0$ for all partial cycle cover $\tilde c$ of the gadget is
too restrictive, as it would imply that the gadget is not a connected graph, but
also that none of the nonzero contribution can be fulfilled. On the other hand,
we cannot consider all partial cycle covers at once and simply imposing
${\sum_{\tilde c}\omega(\tilde c) = 0}$, as this would not imply
\autoref{eq:zeroContribution} in the general case.

We denote by $\mathscr P(S)$
the power set of a set $S$, and write ${\mathscr V_{x,y} = \mathscr P(\OO_A -
\{\{x\}, \{y\}\})}$ the set of outer vertices a cycle cover of $\mathscr
C_{x,y}$ can traverse. Let ${\mathscr C_{x,y}^o \subset \mathscr C_{x,y}}$ denote
the set of partial cycle covers entering in $x$, exiting in $y$ and traversing a
set $o$ of outer vertices, so that 
\begin{equation}\label{eq:partialCycleCoverSet}
    \mathscr C_{x,y} = \bigcup_{o \in \mathscr V_{x,y}} \mathscr C_{x,y}^o.
\end{equation}

The derivation is illustrated via a series of illustrative examples in
\autoref{fig:generalizedCrossingPath,fig:G_oxy,fig:G_oxtoy}. The following
\autoref{fig:generalizedCrossingPath} illustrates the concept of partial cycle
covers on a gadget with three outer vertices, for a particular choice of pair
$(x, y)$.
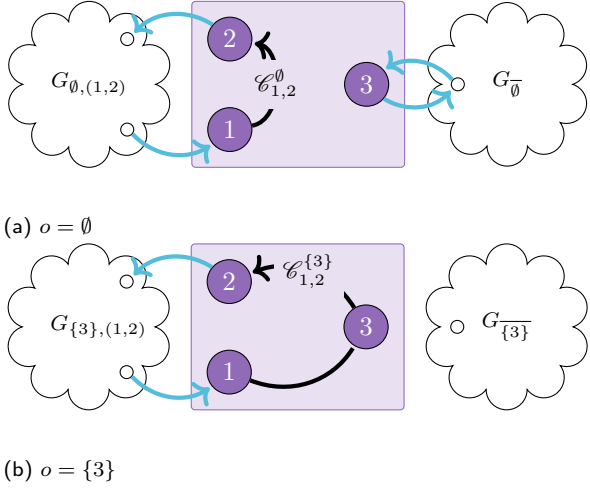
\begin{figure}[th!]
    \centering
    \begin{subfigure}{.95\columnwidth}
        \centering
        \begin{tikzpicture}[
    % -> = stealth, % arrow head style
    shorten > = 1pt, % don't touch arrow head to node
    % auto,
    % node distance = 3cm, % distance between nodes
    thin, % line style
    scale=1,
    every node/.append style={fill=white}
]

\node[rectangle,
rounded corners=0.05cm,
draw=violet,
fill = violet!20,
minimum width = 2.8cm,
minimum height = 2.2cm
] (r) at (.9, -.6) {};

\node[cloud,
draw = black,
max width = 1cm,
cloud puff arc = 180
% max height = 1cm,
% label={[label distance=0]}
] (c) at (-1.85,-.6) {\footnotesize $G_{\emptyset, (1,2)}$};

\node[cloud,
draw = black,
max width = 1cm,
cloud puff arc = 180
] (c) at (3.65, -.6) {\footnotesize $\quad G_{\overline \emptyset}\quad $};

% Main gadget
\node[draw, circle, inner sep = 0.7ex, fill=violet, text=white] at (0, 0) (2) {2};
\node[draw, circle, inner sep = 0.7ex, fill=violet, text=white] at (0, -1.2) (1) {1};
\node[draw, circle, inner sep = 0.7ex, fill=violet, text=white] at (1.8, -.6) (3) {3};

\node[draw, circle, inner sep = 0.4ex] at (3, -.6) (3_) {};
\node[draw, circle, inner sep = 0.4ex] at (-1.35, 0) (2_) {};
\node[draw, circle, inner sep = 0.4ex] at (-1.35, -1.2) (1_) {};

\path[->, color=ultramarine, ultra thick] (3) edge [bend right=40]  (3_);
\path[->, color=ultramarine, ultra thick] (3_) edge [bend right=40]  (3);
% \node[draw, circle, inner sep = 0.7ex] at (2, 3) (3) {2};.
% \path[->, color=ultramarine, ultra thick] (1) edge [bend right=40]  (1_);
\path[->, color=ultramarine, ultra thick] (1_) edge [bend right=40]  (1);

\path[->, color=ultramarine, ultra thick] (2) edge [bend right=40]  (2_);
% \path[->, color=ultramarine, ultra thick] (2_) edge [bend right=40]  (2);
% draw a random path between nodes 1 and 2
\draw [ultra thick, decorate]
(1)  edge[->, bend right = 80] node[midway, fill=violet!20] {\small $\mathscr
C_{1,2}^\emptyset$} (2) ;

% \path [ultra thick]
%     (1) edge[bend right=40] (3) (3) edge[->, bend right = 40] node[midway, fill=violet!20] {$\CC_2$} (2);

\end{tikzpicture}
        \caption{$o = \emptyset$}
        \label{fig:crossingEmptySet}
    \end{subfigure}

    \begin{subfigure}{.95\columnwidth}
        \centering
        \begin{tikzpicture}[
    % -> = stealth, % arrow head style
    shorten > = 1pt, % don't touch arrow head to node
    % auto,
    % node distance = 3cm, % distance between nodes
    thin, % line style
    scale=1,
    every node/.append style={fill=white}
]

\node[rectangle,
rounded corners=0.05cm,
draw=violet,
fill = violet!20,
minimum width = 2.8cm,
minimum height = 2.2cm
] (r) at (.9, -.6) {};

\node[cloud,
draw = black,
max width = 1cm,
cloud puff arc = 180
% max height = 1cm,
% label={[label distance=0]}
] (c) at (-1.85,-.6) {\footnotesize $G_{\{3\}, (1,2)}$};

\node[cloud,
draw = black,
max width = 1cm,
cloud puff arc = 180
] (c) at (3.65, -.6) {\footnotesize $\ \ G_{\overline{\{3\}}}\ \ $};

% Main gadget
\node[draw, circle, inner sep = 0.7ex, fill=violet, text=white] at (0, 0) (2) {2};
\node[draw, circle, inner sep = 0.7ex, fill=violet, text=white] at (0, -1.2) (1) {1};
\node[draw, circle, inner sep = 0.7ex, fill=violet, text=white] at (1.8, -.6) (3) {3};

\node[draw, circle, inner sep = 0.4ex] at (3, -.6) (3_) {};
\node[draw, circle, inner sep = 0.4ex] at (-1.35, 0) (2_) {};
\node[draw, circle, inner sep = 0.4ex] at (-1.35, -1.2) (1_) {};

% \path[->, color=ultramarine, ultra thick] (3) edge [bend right=40]  (3_);
% \path[->, color=ultramarine, ultra thick] (3_) edge [bend right=40]  (3);
% \node[draw, circle, inner sep = 0.7ex] at (2, 3) (3) {2};.
% \path[->, color=ultramarine, ultra thick] (1) edge [bend right=40]  (1_);
\path[->, color=ultramarine, ultra thick] (1_) edge [bend right=40]  (1);

\path[->, color=ultramarine, ultra thick] (2) edge [bend right=40]  (2_);
% \path[->, color=ultramarine, ultra thick] (2_) edge [bend right=40]  (2);
% draw a random path between nodes 1 and 2
% \draw [ultra thick, decorate]
% (1)  edge[->, bend right = 80] node[midway, fill=violet!20] {$\CC_1$} (2) ;

\path [ultra thick]
    (1) edge[bend right=40] (3) (3) edge[->, bend right = 40] node[midway,
    fill=violet!20] {\small $\mathscr C_{1,2}^{\{3\}}$} (2);

\end{tikzpicture}
        \caption{$o = \{3\}$}
        \label{fig:crossingSetThree}
    \end{subfigure}
\caption{The purple rectangle represents a gadget with three outer vertices, the
inner vertices are not shown. For the choice of pair $(x,y) = (1,2)$,
\autoref{fig:crossingEmptySet} (resp. \autoref{fig:crossingSetThree})
illustrates sets of partial cycle covers for the choice of outer vertices  $o =
\emptyset$ (resp. $o = \{3\}$). 
}
\label{fig:generalizedCrossingPath}
\end{figure}

Let $c$ be a cycle cover of the full graph and ${\tilde c_o \in \mathscr
C_{x,y}^o},\ \tilde c_o \subset c$ be a partial cycle cover of the gadget. $G_c$
is, in all generality, composed of a component connected to $x$ and $y$ which we
write $G_{o,(x,y)}$ and a disconnected component $G_{\overline{o}}$ (which is
connected to all outer vertices not in $o$, thus disconnected from the partial
cycle cover). $G_{o,(x,y)}$ indeed contains the edges of $\tilde c_o$ (see
\autoref{fig:G_oxy}).
% \clearpage
\begin{figure}[h!t]
        \centering
        \begin{tikzpicture}[
    % -> = stealth, % arrow head style
    shorten > = 1pt, % don't touch arrow head to node
    % auto,
    % node distance = 3cm, % distance between nodes
    thin, % line style
    scale=1,
    every node/.append style={fill=white}
]

% \node[rectangle,
% rounded corners=0.05cm,
% draw=violet,
% fill = violet!20,
% minimum width = 2.8cm,
% minimum height = 2.2cm
% ] (r) at (.9, -.6) {};

\node[cloud,
draw = black,
max width = 1cm,
cloud puff arc = 180
% max height = 1cm,
% label={[label distance=0]}
] (c) at (-1.85,-.6) {\footnotesize $G_{\emptyset, (1,2)}$};

\node[cloud,
draw = black,
max width = 1cm,
cloud puff arc = 180
] (c) at (3.65, -.6) {\footnotesize $\quad G_{\overline \emptyset} \quad$};

% Main gadget
\node[draw, circle, inner sep = 0.7ex, fill=violet, text=white] at (0, 0) (2) {2};
\node[draw, circle, inner sep = 0.7ex, fill=violet, text=white] at (0, -1.2) (1) {1};
\node[draw, circle, inner sep = 0.7ex, fill=violet, text=white] at (1.8, -.6) (3) {3};

\node[draw, circle, inner sep = 0.4ex] at (3, -.6) (3_) {};
\node[draw, circle, inner sep = 0.4ex] at (-1.35, 0) (2_) {};
\node[draw, circle, inner sep = 0.4ex] at (-1.35, -1.2) (1_) {};

\path[->, color=ultramarine, ultra thick] (3) edge [bend right=40]  (3_);
\path[->, color=ultramarine, ultra thick] (3_) edge [bend right=40]  (3);
% \node[draw, circle, inner sep = 0.7ex] at (2, 3) (3) {2};.
% \path[->, color=ultramarine, ultra thick] (1) edge [bend right=40]  (1_);
\path[->, color=ultramarine, ultra thick] (1_) edge [bend right=40]  (1);

\path[->, color=ultramarine, ultra thick] (2) edge [bend right=40]  (2_);
% \path[->, color=ultramarine, ultra thick] (2_) edge [bend right=40]  (2);
% draw a random path between nodes 1 and 2
\draw [decorate]
(1)  edge[->, bend right = 80] node[midway] {$\tilde c_{\emptyset}$} (2) ;

% \path [ultra thick]
%     (1) edge[bend right=40] (3) (3) edge[->, bend right = 40] node[midway, fill=violet!20] {$\CC_2$} (2);

\end{tikzpicture}
        \caption{Following the example of \autoref{fig:generalizedCrossingPath},
        we set ${o = \emptyset}$, which in turn sets $G_{\emptyset, (1,2)}$ and
        $G_{\overline \emptyset}$. Then, we choose a particular partial cycle cover
        $\tilde c_\emptyset \in \mathscr C_{1,2}$.}
        \label{fig:G_oxy}
\end{figure}
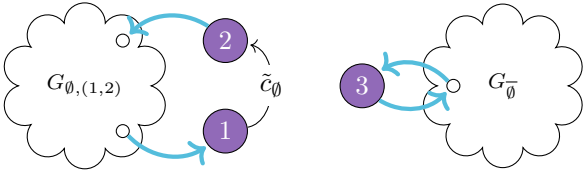

We now describe how to construct a graph $G_{o, x \to y}$ from $G_{o,(x,y)}$
which will be such that ${\per{\mathcal G_{o,(x,y)}} = \per{\mathcal G_{o, x \to y}} \omega(\tilde
c_o)}$. $G_{o, x \to y}$ is built from the component connected to $x$ and $y$ by
removing the edges going from $x$ and $y$ to any other vertex in $A$ and adding
the edge $x \to y$ of weight one (see \autoref{fig:G_oxtoy}). 

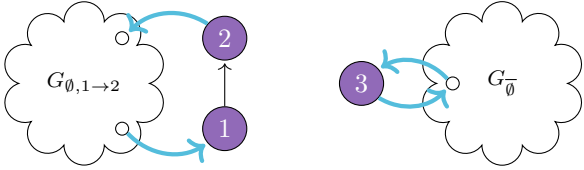
\begin{figure}[h!t]
        \centering
        \begin{tikzpicture}[
    % -> = stealth, % arrow head style
    shorten > = 1pt, % don't touch arrow head to node
    % auto,
    % node distance = 3cm, % distance between nodes
    thin, % line style
    scale=1,
    every node/.append style={fill=white}
]

% \node[rectangle,
% rounded corners=0.05cm,
% draw=violet,
% fill = violet!20,
% minimum width = 2.8cm,
% minimum height = 2.2cm
% ] (r) at (.9, -.6) {};

\node[cloud,
draw = black,
max width = 1cm,
cloud puff arc = 180
% max height = 1cm,
% label={[label distance=0]}
] (c) at (-1.85,-.6) {\footnotesize $G_{\emptyset, 1\to 2}$};

\node[cloud,
draw = black,
max width = 1cm,
cloud puff arc = 180
] (c) at (3.65, -.6) {\footnotesize $\quad G_{\overline \emptyset} \quad$};

% Main gadget
\node[draw, circle, inner sep = 0.7ex, fill=violet, text=white] at (0, 0) (2) {2};
\node[draw, circle, inner sep = 0.7ex, fill=violet, text=white] at (0, -1.2) (1) {1};
\node[draw, circle, inner sep = 0.7ex, fill=violet, text=white] at (1.8, -.6) (3) {3};

\node[draw, circle, inner sep = 0.4ex] at (3, -.6) (3_) {};
\node[draw, circle, inner sep = 0.4ex] at (-1.35, 0) (2_) {};
\node[draw, circle, inner sep = 0.4ex] at (-1.35, -1.2) (1_) {};

\path[->, color=ultramarine, ultra thick] (3) edge [bend right=40]  (3_);
\path[->, color=ultramarine, ultra thick] (3_) edge [bend right=40]  (3);
% \node[draw, circle, inner sep = 0.7ex] at (2, 3) (3) {2};.
% \path[->, color=ultramarine, ultra thick] (1) edge [bend right=40]  (1_);
\path[->, color=ultramarine, ultra thick] (1_) edge [bend right=40]  (1);

\path[->, color=ultramarine, ultra thick] (2) edge [bend right=40]  (2_);
% \path[->, color=ultramarine, ultra thick] (2_) edge [bend right=40]  (2);
% draw a random path between nodes 1 and 2
\draw [decorate]
(1)  edge[->] (2) ;

% \path [ultra thick]
%     (1) edge[bend right=40] (3) (3) edge[->, bend right = 40] node[midway, fill=violet!20] {$\CC_2$} (2);

\end{tikzpicture}
        \caption{Continuing with the example initiated in \autoref{fig:G_oxy},
        we build $G_{\emptyset, 1 \to 2}$ from $G_{\emptyset, (1,2)}$ by
        removing the edges arriving in vertices $1$ and $2$ to any vertex of the
        gadget, and adding the edge $1 \to 2$ of weight 1.}
    \label{fig:G_oxtoy}
\end{figure}
Overall, the weight of the cycle cover $c$ is ${\per{\mathcal G_{o, x\to y}}
\per{\mathcal G_{\overline{o}}} \omega(\tilde c_o)}$. We can observe that ${\per{\mathcal G_{o, x\to y}}\per{\mathcal G_{\overline{o}}}}$
remains unchanged for any choice of $\tilde c_o$ for fixed $o$. In that regard,
\autoref{eq:zeroContribution} rewrites
% \begin{widetext}
\begin{equation}\label{eq:traversedGadgetContribution}
        \begin{aligned}
        \sum_{\substack{(x, y) \in \OO_A^2 \\ x \neq y}}
            % \left(
                \sum_{o \in \mathscr V_{x,y}} 
                \Big( & \per{\mathcal G_{o,x\to y}}
                \per{\mathcal G_{\overline{o}}} \\
                % \left(
                & \times 
                    \sum_{
                        \tilde c_o \in \mathscr C_{x,y}^o
                    }
                    \omega(\tilde c_o) \Big)
                % \right)
            % \right) 
            = 0.
            \end{aligned}
\end{equation}

While designing a gadget with fixed number of inner vertices, the only control one has is on the weights
$\omega(\tilde c_o)$, as all the other terms of
\autoref{eq:traversedGadgetContribution} depend on a specific instance. But
still, one must ensure that \autoref{eq:traversedGadgetContribution} holds.
Hence, in order to design a clause gadget whose supposedly zero contributions
are effectively zero, one must ensure that, for all $(x, y) \in \OO_A \times
\OO_A$ with $x \neq y$ and for all $o \in \mathscr P (\OO_A - \{\{x\}, \{y\}\} )$, the following
holds

\begin{equation}\label{eq:isZeroAndMustBeZero}
    \sum_{
        \tilde c_o \in \mathscr C_{x,y}^o
    } \omega(\tilde c_o) = 0 .
\end{equation}
Applying \autoref{eq:isZeroAndMustBeZero} to the example of
\autoref{fig:generalizedCrossingPath} for the fixed pair $(1, 2)$ yields:
\begin{equation}
        \sum_{
            \tilde c_\emptyset \in \mathscr C_{1,2}^\emptyset
        } \omega(\tilde c_\emptyset) 
        = \sum_{
            \tilde c_{\{3\}}' \in \mathscr C_{1,2}^{\{3\}}
            } \omega(\tilde c_{\{3\}}') 
        = 0.
\end{equation}
% for all sets of outer vertices ${o \in \mathscr P (\OO_A - \{\{x\}, \{y\}\} )}$.
Consequently, a gadget $A$ might require several inner vertices to have enough
degrees of freedom (the supplementary nodes and the weights on their connected
edges) to satisfy \autoref{eq:isZeroAndMustBeZero}. Note that it does not show
that grouping the partial cycle cover with respect to the set of outer vertices
they traverse is the most efficient way to design a gadget, in the sense that
there might be a way to group them in bigger groups, therefore lowering the
number of inner vertices required. However,
\autoref{eq:traversedGadgetContribution} seems to indicate that this choice is
optimal as we only have a local control on each contribution (the $\omega(\tilde
c)$'s).

\begin{algorithm}[t!]
    \caption{Procedure for the generation of the clause gadget constraints}
    \SetKwInOut{Input}{inputs}
    \SetKwInOut{Output}{output}
    \Input{\begin{itemize}[label=-]
        \setlength\itemsep{.1em}
        \item $d$ the degree of the clause,
        \item $k$ the number of inner degrees of freedom,
        \item $m$ a clause of degree $d$.
    \end{itemize}}
    \Output{The family of polynomial equations associated to the input parameters.}
    \Begin{
     Let $F$ be the list of polynomials.\\
     Let $A(V_A, E_A)$ be the symbolic gadget with adjacency matrix $\mathcal A :=
     (x_{ij})_{1 \leq i, j \leq k+d}$.\\
     Let $\OO$ (resp. $\II$) be the set of outer (resp. inner) vertices indexes
     of $A(V_A, E_A)$ (\ie $V_A = \OO \cup \II$). \\

    \For{\emph{\textbf{all}} $\s \in \{0, 1\}^d$}{
        %  Let $\OO_{\s}$ be the subset of $\OO$ where we keep only the indexes indexed by 1 in $\s$.\\
         $\OO_{\s} \leftarrow \left\{i\ |\ i \in \OO,\, s_i = 1 \right\}$\\
         $A'(V'_A,E'_A) \leftarrow A'(\OO_{\s}\cup \II, E_{A'})$ \Comment*{$E_{A'}$ is the set of
         edges of $A$ induced by the vertices in $\OO_{\s}\cup \II$}
         $F \leftarrow  F \cup \per{\mathcal A'}\ -\ e^{i m}$ }%\EndFor
    \For{\emph{\textbf{all}} $(x, y) \in \OO \times \OO$ s.t. $x \neq y$}{
        \For {$\bs v \in \mathscr V_{x,y}$}
        {
        % $A_{\{x, y\} \cup \bs s \cup \II} $ be the graph induced by the
        % vertices $x, y$, those in $\bs s$ and the inner vertices.\\
        $A''(V''_A,E''_A) \leftarrow A''(\{x, y\} \cup \bs v \cup \II, E_{A''})$ \\
        Set the row representing $y$ to zero, apart from the $x$-th entry that
        is set to 1 in $A''$.\\
        $F \leftarrow F \cup \per{\mathcal A''}$
        }%\EndFor
    }%\EndFor
    \Return $F$
    }
    \label{alg:generation}
\end{algorithm}

The construction of the system of equations for the zero contributions is
slightly more involved than the nonzero case. Given two outer vertices $x$ and
$y$ in $\OO_A$, we want that all the partial cycle covers of $\mathscr C_{x,y}$
have zero contribution. In order to write the constraints on the contributions
of all partial covers $\mathscr C_{x,y}$  -- comprising the paths $x \to y$ --
as permanent equations we use the following trick. Given $x$ and $y$, we set to
0 all the entries of the $y$-th row of the adjacency matrix, apart from the
$x$-th entry that we set to 1. This way, all edges exiting $y$ are removed and
an artificial edge $y \to x$ is created, so that all crossing paths $x \to y$
are transformed into cycles. As a result, the permanent of the remaining graph
is exactly the sum of all the contributions of the partial covers of $\mathscr
C_{x,y}$. Recall that we grouped the covers in $\mathscr C_{x,y}$ according to
the set of outer vertices they traverse (see \autoref{eq:partialCycleCoverSet}),
hence this procedure is applied \emph{separately} to such set. From a pair $(x,
y)$ and a set of outer vertices, the procedure generates a single polynomial. We
repeat this procedure for all pairs of outer vertices, and we obtain a system of
$2^{d-2}d(d-1)$ equations. The overall procedure to generate the system of
equations is summarized in \autoref{alg:generation}, and the resulting system of
polynomial equations is characterized in the following
\autoref{pr:characterization}. Observe that the algorithm does no computation,
\ie no permanent is computed, but the variables are symbolically manipulated to
write down polynomials. \autoref{pr:gadgetExistence} asserts that whether a
desired design is possible can be decided by computing the reduced Gröbner basis
of the system of polynomials.

\begin{restatable}[Gadget existence]{thm}{gadget}\label{pr:gadgetExistence}
    There exists a gadget encoding the clause 
    \begin{equation}
        m=\theta \prod_{\substack{i \in S \\ S  \subseteq [n] \\ |S| = d}} x_i
    \end{equation} with $d+k$ vertices if and only if the
    system of polynomial equations created by \autoref{alg:generation} on
    parameters $d$, $k$ and $m$ has a reduced Gröbner basis different from
    $\{1\}$.
\end{restatable}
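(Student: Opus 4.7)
The plan is to show that the existence of a valid gadget is equivalent to the solvability of the polynomial system $F$ produced by \autoref{alg:generation}, and then invoke the Weak Nullstellensatz together with the characterization of ideal membership via reduced Gröbner bases to rephrase solvability as the condition that the reduced Gröbner basis differs from $\{1\}$.

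First I would make precise the equivalence ``gadget exists'' $\Longleftrightarrow$ ``$F$ admits a common complex root''. A gadget with $d+k$ vertices is nothing but an assignment of complex weights to the entries of the symbolic adjacency matrix $\mathcal A = (x_{ij})$. The construction of $F$ proceeds in two blocks, which correspond exactly to \autoref{ob:nonzero} and \autoref{ob:zero}. The first block adds, for every $\s \in \{0,1\}^d$, the equation $\per{\mathcal A'} - e^{i m(\s)} = 0$, which is the very statement of \autoref{ob:nonzero} for the induced sub-gadget selected by $\s$ (as developed in the paragraph \emph{Nonzero contribution}). The second block adds, for every pair $(x,y)\in\OO\times\OO$, $x\neq y$, and every subset $\bs v \in \mathscr V_{x,y}$ of traversed outer vertices, the equation $\per{\mathcal A''} = 0$. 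By the row-substitution trick explained right after \autoref{alg:generation}, this permanent is exactly $\sum_{\tilde c_o \in \mathscr C_{x,y}^o}\omega(\tilde c_o)$, so this block is precisely \autoref{eq:isZeroAndMustBeZero}, which by \autoref{eq:partialCycleCoverSet} and \autoref{eq:traversedGadgetContribution} is equivalent to \autoref{ob:zero}. Hence, a tuple of edge weights defines a valid gadget if and only if it is a common zero of $F$.

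Next I would close the equivalence with standard computational algebra. Let $I = \langle F \rangle \subseteq \mathbb C[x_{ij}]$. By the Weak Nullstellensatz over the algebraically closed field $\mathbb C$, the variety $V(I) \subseteq \mathbb C^{(k+d)^2}$ is nonempty if and only if $I \neq \langle 1 \rangle$, i.e.\ if and only if $1 \notin I$. A classical result of Gröbner basis theory (see~\cite{cox_ideals_2013}) states that, for any monomial order, the reduced Gröbner basis $G$ of $I$ is uniquely determined and satisfies $G = \{1\}$ if and only if $I = \langle 1 \rangle$. Putting these two equivalences together yields: $F$ has a common complex root $\Longleftrightarrow$ $I \neq \langle 1\rangle$ $\Longleftrightarrow$ the reduced Gröbner basis of $F$ is different from $\{1\}$.

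Combining the two parts then gives the claim in both directions: given a valid gadget one exhibits an explicit solution of $F$, hence the reduced Gröbner basis is not $\{1\}$; conversely, if the reduced Gröbner basis is not $\{1\}$ the Nullstellensatz produces a common root, whose coordinates are the weights of a valid gadget on $d+k$ vertices.

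The main obstacle I anticipate is \emph{not} the algebraic step but the bookkeeping needed to justify that \autoref{alg:generation} really captures all constraints --- in particular that grouping partial cycle covers by the subset of outer vertices they traverse is both necessary and sufficient. The necessity of each individual equation of the second block follows from the discussion around \autoref{eq:traversedGadgetContribution}, where the factors $\per{\mathcal G_{o,x\to y}}\per{\mathcal G_{\overline o}}$ are shown to be instance-dependent and unconstrained at gadget design time, so each partial sum $\sum_{\tilde c_o}\omega(\tilde c_o)$ must vanish on its own. Sufficiency is immediate: if all these inner sums vanish, \autoref{eq:zeroContribution} holds for every ambient graph $G$ in which $A$ is embedded, so \autoref{ob:zero} is satisfied universally. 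Once this combinatorial bookkeeping is spelled out, the Nullstellensatz argument closes the proof.
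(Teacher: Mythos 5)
Your proposal is correct and follows essentially the same route as the paper's proof: establish that the two blocks of equations generated by \autoref{alg:generation} encode exactly \autoref{ob:nonzero} and \autoref{ob:zero} (via \autoref{eq:isZeroAndMustBeZero}), then apply Hilbert's Nullstellensatz and the standard fact that the reduced Gröbner basis equals $\{1\}$ precisely when the ideal is the unit ideal. Your treatment of why each partial sum $\sum_{\tilde c_o}\omega(\tilde c_o)$ must vanish individually is in fact spelled out more carefully than in the paper itself.
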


\begin{proof}
    The proof follows from Hilbert's Nullstellensatz \cite[see Chapter
    4]{cox_ideals_2013} and the correctness of \autoref{alg:generation}.
    
    \emph{Proof of correctness of \autoref{alg:generation}.} The first \emph{for
    all} loop writes all the constraints related to \autoref{ob:nonzero}. The
    second \emph{for all} loop tweaks the graph to match
    \autoref{eq:isZeroAndMustBeZero} and write the constraints related to
    \autoref{ob:zero}.
    
    \emph{Proof of gadget existence.} Let $F$ be the system of polynomial
    equations generated by \autoref{alg:generation} on parameters $d$, $k$ and
    $m$. If the reduced Gröbner basis of $\braket{F}$ is not $\{1\}$, then $F$
    has (at least) a solution. 
\end{proof}
Besides its existence, the construction of such a gadget is ensured by the
elimination theory \cite[see Chapter 3]{cox_ideals_2013}, provided the reduced
Gröbner basis is computed with respect to an elimination order. A point $\x =
(x_1,\cdots, x_{(k+d)^2})$ such that $F(\x) = 0$  corresponds to the entries of
the adjacency matrix of the gadget. Incidentally, we have the following
\autoref{eq:optimalK}.

\begin{corollary}[Smallest graph gadget]\label{eq:optimalK}
    The optimal value of $k$ to design a gadget is the smallest integer such
    that the reduced Gröbner basis of the system of polynomial equations
    generated by \autoref{alg:generation} on parameters $d$, $k$ and $m$
    is not $\{1\}$. 
\end{corollary}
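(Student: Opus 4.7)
The plan is to read this off directly from \autoref{pr:gadgetExistence}, since the corollary is just the contrapositive-plus-minimization of the biconditional already established. Writing $\phi(k)$ for the statement ``the reduced Gröbner basis of the ideal generated by the output of \autoref{alg:generation} on inputs $(d,k,m)$ is not $\{1\}$'', \autoref{pr:gadgetExistence} asserts that for every fixed $k$, a gadget with $d+k$ vertices encoding $m$ exists if and only if $\phi(k)$ holds. Since the sets
\[
K_{\text{gad}} := \{k \in \mathbb{N} : \text{a valid gadget with $d+k$ vertices exists}\}
\]
and
\[
K_{\text{Gr}} := \{k \in \mathbb{N} : \phi(k) \text{ holds}\}
\]
coincide by the theorem, they have the same minimum, which is exactly the content of the corollary.

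Thus the only real thing to do in the write-up is to \emph{unpack the definition of ``optimal''}: by convention the optimal number of inner vertices is the smallest nonnegative integer $k$ for which a valid gadget with $d+k$ vertices exists. Well-definedness of this minimum requires that $K_{\text{gad}}$ be nonempty, i.e.\ that at least one valid gadget exists for the clause $m$. I would note that whenever such a gadget is known (for instance, the explicit constructions for $d \le 3$ given in \autoref{sec:example}), the set $K_{\text{gad}}$ is nonempty and bounded below, so its minimum is attained. When no gadget exists at all, the statement is vacuous on both sides.

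The main obstacle, if there is one, is purely linguistic: one must be careful that \autoref{pr:gadgetExistence} is an equivalence \emph{for each $k$ separately}, not a single statement about the existence of some $k$. Once this is observed, the proof is essentially one line: take minima on both sides of the $k$-indexed equivalence. No new algebraic-geometry input beyond Hilbert's Nullstellensatz (already invoked in the proof of \autoref{pr:gadgetExistence}) is needed, and no monotonicity of $\phi$ in $k$ is required, since the corollary only claims that the \emph{smallest} $k$ satisfying $\phi$ coincides with the optimal gadget size, not that $\phi(k)$ continues to hold for all larger $k$.
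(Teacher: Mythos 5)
Your proposal is correct and matches the paper's (implicit) reasoning exactly: the paper offers no separate proof of this corollary, treating it as an immediate consequence of \autoref{pr:gadgetExistence}, which is precisely your observation that the per-$k$ equivalence lets one take minima on both sides. Your additional remarks on well-definedness of the minimum and on not needing monotonicity of $\phi$ in $k$ are sensible clarifications but not a departure from the paper's route.
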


\begin{proposition}[Characterisation of the system of equations]
    \label{pr:characterization}
    Let $A$ be the adjacency matrix of a gadget representing a clause of degree
    $d$ with $k$ inner degrees of freedom. Then, the \autoref{alg:generation}
    generates a system of $\ 2^d\left(d(d-1)2^{-2} + 1\right)$ multilinear
    polynomial equations in $(k+d)^2$ variables. It runs in time
    $\Theta(2^d)$.
\end{proposition}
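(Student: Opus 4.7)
The plan is to establish each of the three quantitative claims (the count of equations, the number of variables, and the running time) by direct inspection of \autoref{alg:generation}, and to establish multilinearity from the combinatorial definition of the permanent.

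First, I would count the equations produced by the first \emph{for all} loop: it iterates over $\s \in \{0,1\}^d$, so it appends exactly $2^d$ polynomials to $F$, one per assignment. Next, I would count the equations produced by the second nested loop. The outer loop ranges over ordered pairs $(x,y) \in \OO \times \OO$ with $x \neq y$; since $|\OO| = d$, this gives $d(d-1)$ pairs. For each fixed pair, the inner loop ranges over $\bs v \in \mathscr V_{x,y} = \mathscr P(\OO - \{x,y\})$, a set of cardinality $2^{d-2}$, each iteration producing one permanent equation. Summing gives a total of
\begin{equation*}
    2^d + d(d-1)\cdot 2^{d-2} \;=\; 2^d\!\left(1 + \tfrac{d(d-1)}{4}\right),
\end{equation*}
which is exactly the announced count.

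To obtain the variable count, I would observe that the symbolic adjacency matrix is declared as $\mathcal A = (x_{ij})_{1\le i,j \le k+d}$ at the start of the algorithm, so the polynomial ring in which every equation lives has $(k+d)^2$ indeterminates. For multilinearity, I would invoke the definition of the permanent as the sum over permutations $\sigma \in S_n$ of $\prod_i a_{i,\sigma(i)}$. Since the entries of $\mathcal A$ (and of every submatrix $\mathcal A'$, $\mathcal A''$ considered in the algorithm) are pairwise distinct indeterminates or the constants $0,1$, each monomial in the resulting permanent is a product of distinct variables (each row-index appears once and each column-index appears once), so every such permanent is multilinear. Subtracting the constant $e^{im}$ from the first family preserves multilinearity, and the second family consists of permanents alone.

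Finally, the running-time bound follows from the same enumeration. The algorithm performs no arithmetic with the permanent values; it only manipulates the matrix symbolically, setting at most one row to a fixed pattern in the second loop, and writes the resulting polynomial into $F$. Each iteration does work bounded by a polynomial in $d$ and $k$, so the dominant cost is the total number of iterations, $2^d + d(d-1)\cdot 2^{d-2} = \Theta(2^d)$, where the hidden factor absorbs the polynomial-in-$d$ overhead of symbolic bookkeeping. The main point requiring care is not really any one step but just the bookkeeping of the index sets $\mathscr V_{x,y}$ and $\OO_{\s}$, making sure the cardinalities are computed consistently with the definitions given earlier, and flagging that the $\Theta(2^d)$ in the running-time statement is meant up to polynomial factors in $d$ and $k$ arising from symbolic manipulation.
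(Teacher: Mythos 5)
Your proposal is correct and follows essentially the same route as the paper's proof: a direct count of the iterations of the two loops ($2^d$ from the first, $d(d-1)\cdot 2^{d-2}$ from the second) and the observation that the $(k+d)\times(k+d)$ symbolic adjacency matrix supplies $(k+d)^2$ indeterminates. You additionally spell out the multilinearity and running-time claims, which the paper's own (very terse) proof leaves implicit, but this is elaboration rather than a different argument.
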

\begin{proof}
    Each entry of the $(k+d) \times (k+d)$ adjacency matrix of the gadget is a
    variable hence the number of variables in the polynomial system is
    $(k+d)^2$. The first \emph{for all} loop of \autoref{alg:generation}
    generates $2^d$ equations, and the second generates $2^{d-2}d(d-1)$.
\end{proof}

A drawback of \autoref{alg:generation} is that the set of polynomial $F$
contains a lot of redundancy. Therefore, in practice before computing the reduced Gröbner
basis of a system of polynomial equations, we advise using Laplace's expansion
formula along a row, which reads for the $i$-th
\begin{equation}
\per{A} = \sum_{j=1}^n a_{i,j} \per{A_{\bar{i},\bar{j}}},
\end{equation}
where $A_{\bar{i},\bar{j}}$ is the submatrix of $A$ obtained by removing the
$i$-th row and the $j$-th column. This relies on the fact that
\autoref{alg:generation} sets the value of many submatrix permanents
\smash{$\per{A_{\bar{i},\bar{j}}}$}, which can then be substituted in the rest
of the system. If a numerical solution is sought, one can use homotopy
continuation \cite{chen_homotopy_2015}, as many computer algebra system do not
implement solution finding from the reduced Gröbner basis over the complex
numbers when the variety is not zero-dimensional \cite{berthomieu_msolve_2021,
demin_groebner_2023}. Finding a solution, that is, a clause gadget, takes
exponential time \cite{bardet_complexity_2015}, but is an
operation that only has to be done once for each degree. Also, it is to be noted
that both polynomial time encodings of
\cite{valiant_complexity_1979,rudolph_simple_2009} do not take into account the
computational complexity of the task of finding a gadget.

\subsection{Computation of some graph gadgets and application to quantum
circuits amplitude computation }\label{sec:example} 

In this section, we give an illustrative example by deriving the graph gadgets
for linear, quadratic and cubic clauses, and show how this can be applied to
quantum circuit probability amplitudes computation.

\paragraph{Gadget computation.} We denote by $A_k(\theta)$ the adjacency matrix
of the gadget encoding a clause of degree $k$ with coefficient $\theta$. ITs and  $k$
first rows/columns index the outer vertices of the gadget, while the remaining
ones indicate the inner vertices. The following gadgets are derived by solving
the system of equations returned by \autoref{alg:generation} on the appropriate
parameters. The computations were performed using the Julia programming language
\cite{demin_groebner_2023,bezanson_julia_2017}. The degree-one monomials with
coefficient $\theta$ are encoded into the following gadget represented by a $1
\times 1$ matrix:
\begin{equation*}
A_1(\theta) := \begin{bmatrix}
    e^{i\theta}
\end{bmatrix}.
\end{equation*}

The quadratic clauses with coefficient $\theta$ are encoded into a gadget with
one inner vertex:
\begin{equation*}
A_2(\theta) :=\begin{bmatrix}\frac{1}{2}(1+e^{i \theta}) & \frac{1}{2}(e^{i
    \theta} - 1) & \frac{1}{2}(1 - e^{i \theta})\\-1 & 0 & 1\\1 & 1 &
    1\end{bmatrix}.
\end{equation*}

Finally, the cubic clauses with coefficient are encoded into the following
gadget comprising two inner vertices. Let $\eta(\theta) := \frac{1}{6}\sqrt{3
(1+i)(1 - e^{i \theta})}$,
% \resizebox{0.5\textwidth}{!}{
\begin{equation*}
% \hspace{-.5cm}
A_3(\theta) :=
\begingroup 
\setlength\arraycolsep{1pt}
\begin{bmatrix}
    \frac{e^{i \theta} - (1 + 12 i)}{-12 i}
    & -\eta(\theta)
    & -\eta(\theta)
    &  \frac{\eta(\theta)}{\sqrt{2}} e^{i \frac \pi 4}
    &  \frac{-\eta(\theta)}{\sqrt{2}}
\\  -\eta(\theta)
    & i
    & -1 +i
    & 1
    & e^{i \frac{3\pi}{4}}
\\ -\eta(\theta)
    & -1 +i
    &i
    & 1
    & e^{i \frac{3\pi}{4}}
\\ \frac{\eta(\theta)}{\sqrt{2}} e^{i \frac \pi 4}
    & 1
    & 1
    & 1
    & 0
\\- \sqrt{ \frac{\left(1 -e^{i \theta}\right)}{24(1+i)}}
    & e^{i \frac{3\pi}{4}}
    & e^{i \frac{3\pi}{4}}
    & 0
    & 1
\end{bmatrix}.
\endgroup
\end{equation*}
% }

Foremost, for each gadget the number of inner vertices is minimal according to
\autoref{eq:optimalK}.

\paragraph{Quantum circuits.} These gadgets can be used to encode probability
amplitudes of a broad family of quantum circuits, we exemplify with an IQP
circuit. A simple gate-set, which comes with hardness results regarding
classical simulation is that of Instantaneous Quantum Polynomial-time (IQP)
circuits
\cite{shepherd_temporally_2009,bremner_classical_2011,bremner_average-case_2016}.
IQP circuits are circuits $\CC = H^{\otimes q} \mathcal D H^{\otimes q}$, with
$\mathcal D$ diagonal in the $Z$-basis. They were introduced as a subset of
quantum circuits that is rich enough to enable sampling from distributions that
is hard to describe classically.

We shall call a \emph{balanced} gate any gate $\UU(\theta)$ whose amplitudes are
of the form
\begin{equation}\label{eq:amplitudeBalancedGate}
    \braket{\y | \UU(\theta) | \x} =
\begin{cases}
  0 \text{ or } \\
  \frac{e^{i \theta f(\x, \y)}}{\sqrt M} ,
\end{cases}
\end{equation}
where $M$ is the number of states $\ket{\y}$ such that $\braket{\y | \UU | \x}$
is nonzero and $f \in \mathbb F_2[\x, \y]$ is a polynomial. Here,
\emph{balanced} indicates that all amplitudes have the same magnitude.
 Examples
of gates satisfying \autoref{eq:amplitudeBalancedGate} are the (arbitrary
controlled) phase gates ($M = 1$), defined as
\begin{equation}\label{eq:defPhaseGates}
        P_k(\theta) : \ket{x_1, \cdots, x_k} \mapsto e^{i \theta \Pi_{j=1}^kx_j} \ket{x_1, \cdots, x_k},
\end{equation}
for all $k \geq 1$ and $x_j \in \{0,1\}$ for all $1 \leq j \leq k$, or the
Hadamard gates ($\theta = \pi$, $M=2$). Suitable gate-sets for $\mathcal D$
could be $\{P_0(\frac{\pi}{8}), P_1(\frac{\pi}{4})\}$ or $\{P_k(\pi)|,k =
0,1,2\}$ \cite{bremner_average-case_2016}. 
The polynomial $f$ is a function of the computational basis bit-strings, in
which each variable is either raised to the zeroth or the first power, \ie a
clause of degree $d$ contains $d$ distinct variables -- so that the resulting
polynomial is a multilinear polynomial of degree $d$. Probability amplitudes of
quantum circuits built upon balanced gates can be straightforwardly encoded onto
a graph. Using encoding of quantum circuit amplitudes onto polynomials
\cite{dawson_quantum_2004,montanaro_quantum_2017} and the gadgets we derived,
each $P_k(\theta)$ gate is to be encoded into a gadget $A_k(\theta)$. The
resulting graph is a graph $G_\CC(V,E)$ such that
\begin{equation}
    \braket{\a | \mathcal C | \b } = \frac{1}{\sqrt{2}^h} \sum_{\x} e^{i f(\x)} = \frac{1}{\sqrt{2}^h} \per{\GG_\CC}.
\end{equation}

Remark that $A_2(\pi)$ is the gadget encoding the quadratic clauses of
\cite{rudolph_simple_2009}, which is consistent with the observation that both
Hadamard (regardless its rescaling factor) and CZ amplitudes are of the form
$(-1)^{xy}$ for the suitable choice of variables $x$ and $y$. The encoding of
\cite{rudolph_simple_2009} therefore comes as a special case of our method.

With these three gadgets in hand, the graph encoding of $q$-qubit IQP circuits
is characterized by the following \autoref{pr:numberOfNodes}. Throughout rest,
let $\#(\UU)$ denote the number of occurrences of the gate $\UU$ in the circuit
$\CC$, \eg $\#(H) = 2q$.

\begin{proposition}[Size of the resulting graph]\label{pr:numberOfNodes} Let $\CC$ be a $q$-qubit IQP circuit
    of depth $d$ and $G_\CC(V,E)$ be the graph such that
    ${\braket{0^{\otimes{q}} | \CC | 0^{\otimes{q}}} =
    \frac{1}{2^q}\per{\GG_\CC}}$. Then, $G_\CC$ is a graph with 
    \begin{equation} M = q + \#(P_0(\theta)) + 3\#(P_1(\theta)) + 5\#(P_2(\theta))
    \end{equation}
    vertices.
\end{proposition}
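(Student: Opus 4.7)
The proof should be a direct vertex-counting argument, and I would organize it in two steps. First, I would rewrite the amplitude through the sum-over-paths formalism: since $\bra{0^{\otimes q}} H^{\otimes q} = 2^{-q/2} \sum_{\y \in \{0,1\}^q} \bra{\y}$ and $\mathcal{D}$ is diagonal in the $Z$-basis, the amplitude reduces to
\[
\braket{0^{\otimes q} | \CC | 0^{\otimes q}} = \frac{1}{2^q} \sum_{\y \in \{0,1\}^q} e^{i f(\y)},
\]
where the polynomial $f$ collects one clause $\theta \prod_{j \in S} y_j$ per $P_k$ gate appearing in $\mathcal{D}$. The Hadamard-induced clauses $a_i y_i$ and $y_i a'_i$ disappear because the boundary values $a_i = a'_i = 0$ annihilate them, so the only surviving variables in $f$ are the $q$ middle variables $y_1, \ldots, y_q$ produced by the first Hadamard layer.

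Second, I would count the vertices of $G_\CC$ by splitting the graph into its clause-gadgets and its variable-cycle apparatus. Each $P_k$ gate is encoded by the gadget $A_{k+1}$ whose vertex count can be read off directly from the explicit matrices in \autoref{sec:example}: $|A_1| = 1$, $|A_2| = 3$, $|A_3| = 5$. Summing gate-by-gate therefore yields a contribution of $\#(P_0) + 3\#(P_1) + 5\#(P_2)$ vertices. For the variable cycles, I would argue that each of the $q$ free variables $y_i$ requires exactly one additional anchor vertex so that its blue cycle can close through the outer vertices of the gadgets containing it, regardless of how many of those gadgets there are. This adds the remaining $q$ vertices, and summing the two contributions gives the claimed $M$.

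The main obstacle I expect is justifying the auxiliary-vertex count rigorously. One needs to show that a single anchor vertex per qubit is both necessary and sufficient: necessary because distinct variable-cycles must be vertex-disjoint (by definition of a cycle cover), ruling out sharing of anchors, and also because the outer vertices of clause-gadgets alone cannot close the cycle without introducing spurious partial cycle covers that would violate \autoref{ob:zero}; sufficient because each $y_i$-cycle can always be routed through its dedicated anchor and the outer vertices of the gadgets it enters, independently of the specific arrangement of $P_k$ gates in $\mathcal{D}$. This step reduces to a structural bookkeeping on the qubit lines and does not require any new algebraic computation beyond the gadget matrices already derived.
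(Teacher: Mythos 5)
Your proof is correct and takes essentially the same route as the paper, whose entire argument is the one-line observation that one counts the nodes of each gadget ($\lvert A_1\rvert=1$, $\lvert A_2\rvert=3$, $\lvert A_3\rvert=5$, so the gates contribute $\#(P_0)+3\#(P_1)+5\#(P_2)$) plus the $q$ anchor vertices carrying the zero boundary conditions --- exactly your accounting. The extra worry about \emph{necessity} of the anchor vertices is not required: the proposition merely reports the size of the specific graph produced by the paper's encoding, so pure bookkeeping suffices.
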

\begin{proof}
    The proof simply follows from counting the number of nodes of each gadget.
\end{proof}

\paragraph*{Example.} For the example, we consider the 3-qubit circuit depicted
in \autoref{fig:exampleIQP} which we call $\CC$.
\begin{figure}[hbtp!]
    \centering
    \input{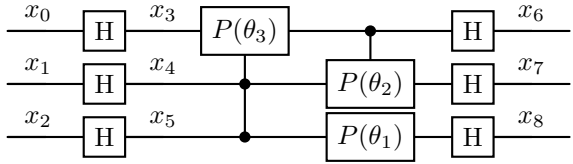}
    \caption{Example 3-qubit circuit with its associated variables $x_0, \cdots, x_8$.}
    \label{fig:exampleIQP}
\end{figure} 
Following the method for creating the variables on the circuit wires, we have
$9$ variables. Write ${\x = (x_0, \ldots, x_8)}$, the polynomial associated to
$\CC$ is
\begin{equation}
    \begin{aligned}
        \hspace{-.8em}f(\x) =  
         \hspace{-0.1cm}\sum_{i=0}^5 \pi  x_i x_{i+3} 
           + \theta_3x_3x_4x_5 + \theta_2x_3x_4 + \theta_1x_5.
    \end{aligned}
\end{equation}
and the amplitude as function of the variables reads
\begin{equation}
\braket{x_6, x_7, x_8 |\ \CC\ | x_0, x_1, x_2} = \frac{1}{2^3} \sum_{\x}e^{i f(\x)}.
\end{equation}
For the sake of the example, let's compute the amplitude $\braket{0^{\otimes 3}
| \CC |0^{\otimes 3}}$ via the graph encoding, \ie set all the variables but
$x_3, x_4, x_5$ to zero. Also, set arbitrarily $\theta_1 := \frac{\pi}{2}$,
$\theta_2 := \frac{\pi}{4}$ and $\theta_3 := \frac{\pi}{8}$. Then, construct the
graph from the gadgets $A_i(\theta_i)$, $i = 0, 1, 2$; it is shown in
\autoref{fig:graphExampleIQP}. Note that since the variables encoding the
amplitude are set to zero, a slightly more economical encoding is possible for
all the clauses involving theses variables as there is no polynomial to evaluate
(see \autoref{eq:amplitudeBalancedGate}): it is the constant polynomial $0$.
Finally, we compute the permanent of $\GG_\CC$ to get the amplitude. Numerical
computation yields
\begin{equation}
    \braket{0^{\otimes 3} | \CC | 0^{\otimes 3}} = \frac{1}{8} \per{\GG_\CC} \approx (0.348+0.511i).
\end{equation}

\begin{figure}[hbtp!]
    \centering
    % \documentclass{standalone}
% \usepackage{tikz-network}
% \usetikzlibrary{backgrounds}
% \begin{document}
\begin{tikzpicture}[scale=1.7] % 1.7
\begin{scope}[on background layer]
    \node[rectangle,
        rounded corners=0.05cm,
        draw=violet,
        fill = violet!20,
    	minimum width = 3.1cm, %3.06
    	minimum height = 1.9cm %1.87
     ] (r) at (2.92,3) {};

     \node[rectangle,
        rounded corners=0.05cm,
        draw=darkgreen,
        fill = darkgreen!20,
    	minimum width = 1.9cm, %1.87
    	minimum height = 3.1cm %3.05
     ] (r) at (0.88,2.9475) {};

      \node[rectangle,
        rounded corners=0.05cm,
        draw=darkred,
        fill = darkred!30,
    	minimum width = 1.2cm,
    	minimum height = 1.6cm
     ] (r) at (4.400, 3.65) {};

\end{scope}
\node[] () at (4.400, 3.92) {$A_1(\theta_1)$};
\node[] () at (.65,2.2) {$A_2(\theta_2)$};
\node[] () at (3.45,2.6) {$A_3(\theta_3)$};
% \node[] () at (0, 2.9) {$G_0 = $};
\node[] () at (4.1, 2.9475)  {\Large$x_5$};
\node[] () at (1.7, 3.526) {\Large$x_4$};
\node[] () at (1.7, 2.369)  {\Large$x_3$};

% \clip (0,0) rectangle (6,6);
\Vertex[x=1.731,y=4.075,size=0.2,color=white,style={semithick, minimum width=.6cm }]{0}
\Vertex[x=1.731,y=1.820,size=0.2,color=white,style={semithick, minimum width=.6cm }]{1}
\Vertex[x=4.4,y=2.587,size=0.2,color=white,style={semithick, minimum width=.6cm }]{2}
\Vertex[x=2.365,y=3.176,size=0.2,color=violet,style={semithick, minimum width=.6cm }]{3}
\Vertex[x=2.365,y=2.719,size=0.2,color=violet,style={semithick, minimum width=.6cm }]{4}
\Vertex[x=3.541,y=2.9475,size=0.2,color=violet,style={semithick, minimum width=.6cm }]{5}
\Vertex[x=2.953,y=2.719,size=0.2,color=violet!40,style={semithick, minimum width=.6cm }]{6}
\Vertex[x=2.953,y=3.176,size=0.2,color=violet!40,style={semithick, minimum width=.6cm }]{7}
\Vertex[x=1.025,y=3.526,size=0.2,color=darkgreen,style={semithick, minimum width=.6cm }]{8}
\Vertex[x=1.025,y=2.369,size=0.2,color=darkgreen,style={semithick, minimum width=.6cm }]{9}
\Vertex[x=0.600,y=2.9475,size=0.2,color=darkgreen!40,style={semithick, minimum width=.6cm }]{10}
% \Vertex[x=1.493,y=4.280,size=0.2,color=white,style={semithick, minimum width=.6cm }]{11}
% \Vertex[x=1.731,y=1.820,size=0.2,color=white,style={semithick, minimum width=.6cm }]{12}
\Vertex[x=4.400,y=3.5,size=0.2,color=darkred,style={semithick, minimum width=.6cm }]{13}
% \Vertex[x=4.691,y=2.868,size=0.2,color=white,style={semithick, minimum width=.6cm }]{14}
\Edge[,lw=0.2,bend=-8.531,style={semithick, -{Latex[length=0.09cm,width=0.1cm]},  },loopsize=0.2cm,loopposition=90,loopshape=45,Direct](0)(0)
\Edge[,lw=0.2,bend=30,style={ultra thick, ultramarine, -{Latex[length=0.18cm,width=0.2cm]},  },loopsize=0.2cm,loopposition=90,loopshape=45,Direct](0)(3)
\Edge[,lw=0.2,bend=-30,style={ultra thick, ultramarine, -{Latex[length=0.18cm,width=0.2cm]},  },loopsize=0.2cm,loopposition=90,loopshape=45,Direct](1)(4)
\Edge[,lw=0.2,bend=-8.531,style={semithick, -{Latex[length=0.09cm,width=0.1cm]},  },loopsize=0.2cm,loopposition=85,loopshape=45,Direct](1)(1)
\Edge[,lw=0.2,bend=30,style={ultra thick, ultramarine, -{Latex[length=0.18cm,width=0.2cm]},  },loopsize=0.2cm,loopposition=90,loopshape=45,Direct](2)(5)
\Edge[,lw=0.2,bend=-8.531,style={semithick, -{Latex[length=0.09cm,width=0.1cm]},  },loopsize=0.2cm,loopposition=90,loopshape=45,Direct](3)(3)
\Edge[,lw=0.2,bend=-8.531,style={semithick, -{Latex[length=0.09cm,width=0.1cm]},  },loopsize=0.2cm,loopposition=90,loopshape=45,Direct](3)(4)
\Edge[,lw=0.2,bend=-8.531,style={semithick, -{Latex[length=0.09cm,width=0.1cm]},  },loopsize=0.2cm,loopposition=90,loopshape=45,Direct](3)(5)
\Edge[,lw=0.2,bend=-8.531,style={semithick, -{Latex[length=0.09cm,width=0.1cm]},  },loopsize=0.2cm,loopposition=90,loopshape=45,Direct](3)(6)
\Edge[,lw=0.2,bend=-8.531,style={semithick, -{Latex[length=0.09cm,width=0.1cm]},  },loopsize=0.2cm,loopposition=90,loopshape=45,Direct](3)(7)
\Edge[,lw=0.2,bend=30,style={ultra thick, ultramarine, -{Latex[length=0.18cm,width=0.2cm]},  },loopsize=0.2cm,loopposition=90,loopshape=45,Direct](3)(8)
\Edge[,lw=0.2,bend=-8.531,style={semithick, -{Latex[length=0.09cm,width=0.1cm]},  },loopsize=0.2cm,loopposition=90,loopshape=45,Direct](4)(3)
\Edge[,lw=0.2,bend=-8.531,style={semithick, -{Latex[length=0.09cm,width=0.1cm]},  },loopsize=0.2cm,loopposition=90,loopshape=45,Direct](4)(4)
\Edge[,lw=0.2,bend=-8.531,style={semithick, -{Latex[length=0.09cm,width=0.1cm]},  },loopsize=0.2cm,loopposition=90,loopshape=45,Direct](4)(5)
\Edge[,lw=0.2,bend=-8.531,style={semithick, -{Latex[length=0.09cm,width=0.1cm]},  },loopsize=0.2cm,loopposition=90,loopshape=45,Direct](4)(6)
\Edge[,lw=0.2,bend=-8.531,style={semithick, -{Latex[length=0.09cm,width=0.1cm]},  },loopsize=0.2cm,loopposition=90,loopshape=45,Direct](4)(7)
\Edge[,lw=0.2,bend=-30,style={ultra thick, ultramarine, -{Latex[length=0.18cm,width=0.2cm]},  },loopsize=0.2cm,loopposition=90,loopshape=45,Direct](4)(9)
\Edge[,lw=0.2,bend=-8.531,style={semithick, -{Latex[length=0.09cm,width=0.1cm]},  },loopsize=0.2cm,loopposition=90,loopshape=45,Direct](5)(3)
\Edge[,lw=0.2,bend=-8.531,style={semithick, -{Latex[length=0.09cm,width=0.1cm]},  },loopsize=0.2cm,loopposition=90,loopshape=45,Direct](5)(4)
\Edge[,lw=0.2,bend=-8.531,style={semithick, -{Latex[length=0.09cm,width=0.1cm]},  },loopsize=0.2cm,loopposition=90,loopshape=45,Direct](5)(5)
\Edge[,lw=0.2,bend=-8.531,style={semithick, -{Latex[length=0.09cm,width=0.1cm]},  },loopsize=0.2cm,loopposition=90,loopshape=45,Direct](5)(6)
\Edge[,lw=0.2,bend=-8.531,style={semithick, -{Latex[length=0.09cm,width=0.1cm]},  },loopsize=0.2cm,loopposition=90,loopshape=45,Direct](5)(7)
\Edge[,lw=0.2,bend=30,style={ultra thick, ultramarine, -{Latex[length=0.18cm,width=0.2cm]},  },loopsize=0.2cm,loopposition=90,loopshape=45,Direct](5)(13)
\Edge[,lw=0.2,bend=-8.531,style={semithick, -{Latex[length=0.09cm,width=0.1cm]},  },loopsize=0.2cm,loopposition=90,loopshape=45,Direct](6)(3)
\Edge[,lw=0.2,bend=-8.531,style={semithick, -{Latex[length=0.09cm,width=0.1cm]},  },loopsize=0.2cm,loopposition=90,loopshape=45,Direct](6)(4)
\Edge[,lw=0.2,bend=-8.531,style={semithick, -{Latex[length=0.09cm,width=0.1cm]},  },loopsize=0.2cm,loopposition=90,loopshape=45,Direct](6)(5)
\Edge[,lw=0.2,bend=-8.531,style={semithick, -{Latex[length=0.09cm,width=0.1cm]},  },loopsize=0.2cm,loopposition=90,loopshape=45,Direct](6)(6)
\Edge[,lw=0.2,bend=-8.531,style={semithick, -{Latex[length=0.09cm,width=0.1cm]},  },loopsize=0.2cm,loopposition=90,loopshape=45,Direct](7)(3)
\Edge[,lw=0.2,bend=-8.531,style={semithick, -{Latex[length=0.09cm,width=0.1cm]},  },loopsize=0.2cm,loopposition=90,loopshape=45,Direct](7)(4)
\Edge[,lw=0.2,bend=-8.531,style={semithick, -{Latex[length=0.09cm,width=0.1cm]},  },loopsize=0.2cm,loopposition=90,loopshape=45,Direct](7)(5)
\Edge[,lw=0.2,bend=-8.531,style={semithick, -{Latex[length=0.09cm,width=0.1cm]},  },loopsize=0.2cm,loopposition=90,loopshape=45,Direct](7)(7)
\Edge[,lw=0.2,bend=-8.531,style={semithick, -{Latex[length=0.09cm,width=0.1cm]},  },loopsize=0.2cm,loopposition=90,loopshape=45,Direct](8)(9)
\Edge[,lw=0.2,bend=-8.531,style={semithick, -{Latex[length=0.09cm,width=0.1cm]},  },loopsize=0.2cm,loopposition=90,loopshape=45,Direct](8)(10)
\Edge[,lw=0.2,bend=30,style={ultra thick, ultramarine, -{Latex[length=0.18cm,width=0.2cm]},  },loopsize=0.2cm,loopposition=90,loopshape=45,Direct](8)(0)
\Edge[,lw=0.2,bend=-8.531,style={semithick, -{Latex[length=0.09cm,width=0.1cm]},  },loopsize=0.2cm,loopposition=90,loopshape=45,Direct](9)(8)
\Edge[,lw=0.2,bend=-8.531,style={semithick, -{Latex[length=0.09cm,width=0.1cm]},  },loopsize=0.2cm,loopposition=90,loopshape=45,Direct](9)(10)
\Edge[,lw=0.2,bend=-30,style={ultra thick, ultramarine, -{Latex[length=0.18cm,width=0.2cm]},  },loopsize=0.2cm,loopposition=90,loopshape=45,Direct](9)(1)
\Edge[,lw=0.2,bend=-8.531,style={semithick, -{Latex[length=0.09cm,width=0.1cm]},  },loopsize=0.2cm,loopposition=90,loopshape=45,Direct](10)(8)
\Edge[,lw=0.2,bend=-8.531,style={semithick, -{Latex[length=0.09cm,width=0.1cm]},  },loopsize=0.2cm,loopposition=90,loopshape=45,Direct](10)(9)
\Edge[,lw=0.2,bend=-8.531,style={semithick, -{Latex[length=0.09cm,width=0.1cm]},  },loopsize=0.2cm,loopposition=90,loopshape=45,Direct](10)(10)
% \Edge[,lw=0.2,bend=30,style={ultra thick, ultramarine, -{Latex[length=0.18cm,width=0.2cm]},  },loopsize=0.2cm,loopposition=90,loopshape=45,Direct](11)(0)
% \Edge[,lw=0.2,bend=-8.531,style={semithick, -{Latex[length=0.09cm,width=0.1cm]},  },loopsize=0.2cm,loopposition=90,loopshape=45,Direct](11)(11)
% \Edge[,lw=0.2,bend=-30,style={ultra thick, ultramarine, -{Latex[length=0.18cm,width=0.2cm]},  },loopsize=0.2cm,loopposition=90,loopshape=45,Direct](12)(1)
% \Edge[,lw=0.2,bend=-8.531,style={semithick, -{Latex[length=0.09cm,width=0.1cm]},  },loopsize=0.2cm,loopposition=90,loopshape=45,Direct](12)(12)
\Edge[,lw=0.2,bend=-8.531,style={semithick, -{Latex[length=0.09cm,width=0.1cm]},  },loopsize=0.2cm,loopposition=90,loopshape=45,Direct](13)(13)
\Edge[,lw=0.2,bend=30,style={ultra thick, ultramarine, -{Latex[length=0.18cm,width=0.2cm]},  },loopsize=0.2cm,loopposition=90,loopshape=45,Direct](13)(2)
% \Edge[,lw=0.2,bend=30,style={ultra thick, ultramarine, -{Latex[length=0.18cm,width=0.2cm]},  },loopsize=0.2cm,loopposition=90,loopshape=45,Direct](14)(2)
% \Edge[,lw=0.2,bend=-8.531,style={semithick, -{Latex[length=0.09cm,width=0.1cm]},  },loopsize=0.2cm,loopposition=90,loopshape=45,Direct](14)(14)

\Edge[,lw=0.2,bend=-8.531,style={semithick, -{Latex[length=0.09cm,width=0.1cm]},  },loopsize=0.2cm,loopposition=90,loopshape=45,Direct](2)(2)

\end{tikzpicture}
    \caption{Graph encoding of the circuit of \autoref{fig:exampleIQP}. The
    gadget labelled $A_k(\theta_k)$ results from the encoding of the
    $P_k(\theta_k)$ gate. The blue edges have weight $1$, and the adjacency
    matrices of the graph gadgets $A_k(\theta_k)$ were given earlier in
    \autoref{sec:example}. We encode the zeros in the input and output using a
    similar economical encoding as proposed in \cite[Note Ref.
    10]{rudolph_simple_2009}, \ie via self-loops of weight 1 (the white nodes).}
    \label{fig:graphExampleIQP}
\end{figure}
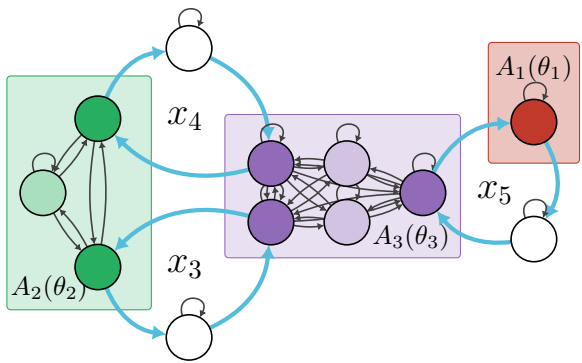

\section{Applications}\label{sec:applications}

\paragraph{Classical simulation.} The encoding constructs sparse graphs, even in the
general case, and even with gadget we do not know yet the design of, as it
constructs a matrix with blocks of constant size in its diagonal linked together
with a constant number of edges. As such, we can use the algorithm of
\cite{servedio_computing_2005}, which allows to compute the permanent of sparse
$n \times n$ matrices with $Cn$ nonzero entries in time $(2-\varepsilon)^n$ for
some strictly positive constant $\varepsilon$ and space $\bigo n$. This is
indeed better than Ryser's $\bigo{2^n}$ formula \cite{ryser_combinatorial_1973} for exact computation, or 
Gurvits' approximate algorithm
\cite{aaronson_generalizing_2012}, which, given
$A$ of order $n$, computes an estimate of $\per A$ to within
additive error $\pm \varepsilon \|A\|_2^n$ in time $O(n^2\varepsilon^{-2})$.
Because the cycles linking the gadgets and representing the variables must have
weight 1, we have $\|A\|_2 \geq 1$ \cite{hwang_cauchys_2004}. Incidentally, the connected components of the graph one obtains from the
encoding are correlated with the set of qubits interacting together. Recall that
for the matrix $A = \diag(A_2, \ldots, A_k)$ with $A_i$ a square matrix for all
$1 \leq i \leq k$,
\begin{equation}
\per{A} = \prod_{i = 1}^k \per{A_i}.
\end{equation}
This can be combined with \cite{servedio_computing_2005} to further speed up the
computation. However, in practice, for random $q$-qubit circuits of depth $d$
build upon 2-qubit gates, there are $\bigo{1}$ blocks as soon as ${d =
\bigo{\log q}}$.

\paragraph{Quantum simulation.}

A natural question that arises is how the method developed above compares against the standard gate-based implementation in linear optics following the KLM scheme \cite{knill_scheme_2001}. 
In \autoref{app:benchmark}, we benchmark the graph encoding against the KLM scheme and our encoding is found to be less demanding,
both in terms of samples and single photons per shot. 
We formally show that this holds and the results culminate in \autoref{th:sizeSq} in \autoref{app:shotsKLM} and \autoref{th:sizePq} in \autoref{sec:photons}.
The empirical bound of
$44$ qubits, where we expect improvement with almost unit probability, is
outside the scope of current simulators capabilities
\cite{heurtel_perceval_2023}. We believe similar bounds could be obtained for
other sets of gates, as the probabilistic nature of linear optical multi-qubit
gates plays in our favour.

\section{Conclusion}\label{sec:conclusion}

We have provided a method to compute sums of exponentiated polynomials as matrix permanents. This
technique finds its application in the encoding of quantum circuits amplitudes.
This yields a general extension of the encoding of quantum circuit output amplitudes
as graph permanents of \cite{rudolph_simple_2009}, unifying this graph-encoding
with polynomial expression of quantum circuit amplitudes as in
\cite{montanaro_quantum_2017}. To achieve this, we have shown that a clause
gadget can be found by solving a system of multilinear polynomial equations
through Gröbner basis theory. While hard in general, this computation has only
to be done once per degree, providing a parameterized gadget for the clauses of
that degree. Additionally, we remark that the
nice structure of the graph opens the door to permanent-based classical methods
for the simulation of quantum circuits
\cite{servedio_computing_2005,ryser_combinatorial_1973,jerrum_polynomial_2004}.

\section*{Acknowledgements}
The authors warmly thank \mbox{Ulysse Chabaud} and Shane Mansfield for the
valuable feedback on the manuscript and \mbox{Elham Kashefi} for the fruitful
discussions. This work has been co-funded by the European Commission as part of
the EIC accelerator program under the grant agreement 190188855 for SEPOQC
project, by the Horizon-CL4 program under the grant agreement 101135288 for
EPIQUE project.

\bibliographystyle{quantum}
\bibliography{bibliography}

\onecolumn
\appendix

\section{Linear algebra}\label{app:linalg} 

This paragraph outlines the linear-algebra tools and notation used throughout
this work. We use bold fonts for a shorthand notation of vectors: $\x = (x_1,
\cdots, x_n)$. The singular value decomposition of a matrix $A \in \mathbb C^{m
\times n}$ is a factorization of $A$ into the product of three matrices $U
\Sigma V^*$, where the columns of $U$ and $V$ are orthonormal and $\Sigma =
\diag(\sigma_1, \cdots, \sigma_n)$ is a diagonal matrix with positive real
entries -- $A$'s \emph{singular values}. For a matrix $A \in \mathbb C^{m \times
n}$, $\|A\|_p$ denotes the matrix norm induced by the vector $p$-norm, for
$p\in[1, \infty]$, defined as
\begin{equation}
    \|A\|_p = \sup_{x \neq 0} \frac{\|Ax\|_p}{\|x\|_p}.
\end{equation}
Special cases of $p=1,2,\infty$ are 
\begin{subequations}
    \begin{align}
        \|A\|_{1\ } & =  \max_{1\leq j \leq n} \sum_{i=1}^m \abs{a_{ij}}, \\
        \|A\|_{2\ } & =  \sigma_{\max}(A), \label{eq:normSingularValue} \\ 
        \|A\|_\infty & =  \max_{1\leq i \leq m} \sum_{j=1}^n \abs{a_{ij}},
    \end{align}
\end{subequations}
where $\sigma_{\max}(A)$ denotes the largest singular value of $A$.  The
function $\|\cdot\|_p$ is sub-additive \cite{horn_matrix_2012}, \ie ${\|A+B\|_p
\leq \|A\|_p+\|B\|_p}$ for all $p\in[1, \infty]$. Let $\lambda_1, \ldots,
\lambda_n$ be the eigenvalues of $A$, the \emph{spectral radius} of $A$ is
defined as
\begin{equation}
    \rho(A) = \max \{\abs{\lambda_1}, \ldots,\abs{\lambda_n}\}.
\end{equation}
It follows that $\|A\|_2 = \rho(A^*A)$ where $A^*$ denotes the conjugate
transpose of $A$ and $\rho(A) \leq \|A\|_\infty$ for any real-valued matrix $A$
\cite{zheng_spectral_2008}. The notation $\diag$ is extended to matrices, \ie
the block diagonal matrix $A$ with blocks $A^{(1)}, \ldots, A^{(k)}$ is written
\begin{equation}
    A = \diag \left(\{A^{(j)}\}_{j=1}^k\right),
\end{equation}
with $A^{(j)}$ an $n_j \times n_j$ matrix and $\sum_{j=1}^k n_j=n$.

Finally, we will use
Cauchy's interlacing theorem for the eigenvalues of Hermitian matrices
\cite{carlson_minimax_1983}, which we states that, assuming $A\in \mathbb C^{n
\times n}$ is Hermitian and $B \in \mathbb C^{m \times m}$ with $m<n$ is a
principal submatrix of $A$, suppose $A$ has eigenvalues $\alpha_1 \geq \cdots
\geq \alpha_n$ and $B$ has eigenvalues $\beta_1\geq\cdots\geq\beta_m$, then, for
all $k = 1, \ldots, m,$ the eigenvalues of $B$ interlace those of $A$, \ie
\begin{equation}
    \alpha_k \geq \beta_k \geq \alpha_{k+n-m}.
\end{equation}

\section{Benchmark of the scheme and comparison with non-adaptive KLM}\label{app:benchmark}

Linear optical quantum computation involves $n$ identical photons evolving in
$m$ modes. Each computational basis state is of the form $\ket{\s} = (s_1,
\cdots, s_m)$ with $\sum_{i=1}^{m} s_i = n$, $s_i \geq 0$ is the number of
photons in mode $i$. The evolution of state $\s$ undergoing the transformation
induced by a linear optical circuit is described by a unitary operator $\mathcal
U$ acting on the state $\s$, we have
\begin{equation}
    \braket{\bs t | \mathcal U |\bs s} = \frac{\per{\mathcal U_{\bs s, \bs t}}}{\sqrt{\bs s! \bs t!}}.
\end{equation}

On the one hand, it was shown in \cite{mezher_solving_2023} that it is possible
to encode the adjacency matrix of a graph $G(V,E)$ onto a unitary matrix
$\mathcal U$ such that 
\begin{equation}
    \braket{\bs 1_n | \mathcal U |\bs 1_n} \propto \per{\mathcal G},
\end{equation}
where $\ket{\bs 1_n} =\ket{1^{\otimes n}0^{\otimes{m-n}}}$. In particular, let
$\CC$ be a quantum circuit and $G_\CC(V,E)$ be the graph encoding the amplitude
$\braket{\b | \CC | \a}$ for $\a, \b \in \{0, 1\}^n$, then it is possible to
encode the adjacency matrix $\GG_\CC$ into a linear optical circuit
$\mathcal V$ such that 
\begin{equation}
    \braket{\bs 1_n | \mathcal V |\bs 1_n} \propto \braket{\bs b | \CC | \bs a }.
\end{equation}

On the other hand, it was shown by Knill, Laflamme and Milburn (KLM)
\cite{knill_scheme_2001} that it is possible to implement any unitary
transformation using linear optical elements, single-photon sources and
single-photon detectors. The simplest, nonadaptive KLM scheme uses single photons and
passive linear optical elements; implementing multi-qubit gates is achieved
using auxiliary helper photons. It was shown that multi-qubit gates can only
performed probabilistically in that model, and the best known CZ gate
implementation succeeds with probability $2/27$ \cite{knill_quantum_2002}. The
implementation of the gate-based circuit $\mathcal C$ can be done directly using this
scheme. 

From either implementation, we can estimate the probability of interest by
sampling $N$ times from the output distribution of the linear optical device and
outputting the ratio $\frac{N_{post}}{N}$ where $N_{post}$ is the number of
times the post-selection condition is satisfied. 

In this section, we benchmark our method against the nonadaptive KLM scheme
\cite{knill_scheme_2001} on the task of estimating the zero-zero probability of
uniformly random $q$-qubit IQP circuits with diagonal gates \{Z, CZ, CCZ\}. We
first compare the amount of samples required to get an estimation
$\epsilon$-close from the true zero-zero probability, then give figures on the
amount of single photons needed to perform that task. We denote by $\II_q$ the
set of all $q$-qubit IQP circuits whose diagonal part is build upon the set
$\{Z, CZ, CCZ\}$, so that $\abs{\II_q} = 2^{\binom{q}{3} + \binom{q}{2} + q}$ \cite{montanaro_quantum_2017}.

\subsection{Post-selection rate}\label{sec:postSelectionRate}

In a physical experiment, the number of samples $N$ is finite and this induces a
statistical error in the estimation of the zero-zero probability. To give bounds
on that error, we use standard statistical inequality inequality
\cite{hoeffding_probability_1963}, which states that, assuming $X_1, \cdots,
X_N$ are $N$ i.i.d. random variables of expectation value $\mu := \e{X_i}$ such
that $a\leq X_i \leq b$ for all $i=1, \cdots, N$, then
\begin{equation}\label{eq:hoeffding}
    \pr{\left|\frac{1}{N}\sum_{i=1}^{N}X_i - \mu \right| \geq \epsilon} \leq \delta =  2e^{\frac{-2N\epsilon^2}{(b-a)^2}}.
\end{equation}
for any $0 < \epsilon < 1$. For simplicity, assume $a=b=1$. If $N$ is fixed, the
inequality can be used to bound the accuracy $\epsilon$ of the estimate for a
confidence parameter $\delta$, where $\epsilon$ reads
\begin{equation}\label{eq:epsilonHoeffding}
    \epsilon \leq \sqrt{\frac{1}{2N}\log\frac{2}{\delta}}.
\end{equation}

In order to apply Hoeffding's inequality to the two schemes, they must be
expressed in terms of the sum of i.i.d. random variables whose expectation
values -- hereafter denoted $\mu_\GG$ and $\mu_{\textsc{klm}}$ for the graph
encoding and KLM technique technique -- are known. While sampling the photonic
device, one has access to two quantities: $N$, the total number of samples, and
$N_{post}$, the number of \emph{post-selected} samples -- for the appropriate
definition of a post-selected sample. Therefore, Hoeffding's inequality can be
straightforwardly applied using the observed mean value $\frac{N_{post}}{N}$.
Depending on the scheme implemented (the graph encoding or KLM) on the photonic
devices that is sampled, we denote $\Ng$ or $\Nk$ the total number of samples
and $\Ngp$ or $\Nkp$ the number of post-selected samples.

\subsubsection{Post-selection for the graph-encoding technique.}
Given a $q$-qubit IQP circuit $\CC \in \II_q$, an estimate
\smash{$\est{\abs{\per{\GG_\CC}}^2}$} of the squared absolute value of the
graph's permanent is obtained by sampling the output distribution induced by the
$2M \times 2M$ unitary in which $G_\CC$ is embedded
\cite{mezher_solving_2023}, so that
\begin{equation}\label{eq:permanentEstimate}
    \est{\abs{\per{\GG_\CC}}^2} = \|\GG_\CC\|^{2M}_2 \frac{\Ngp}{\Ng},
\end{equation}
where $\Ngp$ is the number of samples with 1 photon in each of the $M$ first
modes. One is henceforth interested in the scaling of the ratio
$\frac{\Ngp}{\Ng}$, and in particular at what pace. The estimate we obtain from
\autoref{eq:permanentEstimate} is ultimately written
\begin{equation}
\est{\abs{\per{\GG_\CC}}^2} = \abs{\per{\GG_\CC}}^2\ \pm \ \delta(\Ng),
\end{equation}
where $\delta(\Ng)$ is such that
\begin{equation}
\lim_{\Ng\rightarrow +\infty} \delta(\Ng) = 0.
\end{equation}

First, we shall prove the following \autoref{pr:normUpperBound}, which gives a
constant upper bound on the largest singular value of $\GG_\CC$.

\begin{lemma}\label{pr:normUpperBound} Let $q, d \in
\mathbb N^*$. Let $\CC \in \II_q$ be a $q$-qubit IQP circuit of depth $d$ and
let $\GG_\CC$ be its graph encoding. Then,
\begin{equation}\label{eq:2normUpperBound}
\hspace{-.2cm}
\max\{1, \|A_i(\pi)\|_2 - 1\} \leq \|\GG_\CC\|_2 \leq \| A_i(\pi)\|_2 + 1,
\end{equation}
for the largest $i=0, 1, 2$ such that $\CC$ contains the gate $P_i(\pi)$.
\end{lemma}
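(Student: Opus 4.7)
The plan is to decompose the adjacency matrix as $\GG_\CC = D + E$, where $D$ is the block-diagonal matrix whose blocks are the clause-gadget matrices $A_{k_j}(\pi)$ (one block per gate of $\CC$, supplemented by trivial $1 \times 1$ identity blocks that encode the input/output zeros via self-loops of weight one), and $E$ collects the weight-one edges of the blue variable cycles that link outer vertices of distinct gadgets and the white vertices. By construction, distinct blue cycles are vertex-disjoint and every vertex is visited by at most one such cycle, so $E$ is, up to a relabelling of vertices, a partial permutation matrix, and its support is disjoint from that of $D$.

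From there, the argument splits into short steps. First, since $D$ is block-diagonal up to a permutation, its singular values are the union of the singular values of its blocks, so $\|D\|_2 = \max_j \|A_{k_j}(\pi)\|_2$. A direct inspection of the explicit matrices $A_i(\pi)$ displayed in \autoref{sec:example} shows that $i \mapsto \|A_i(\pi)\|_2$ is non-decreasing, hence this maximum is attained at the largest $i \in \{0,1,2\}$ for which $P_i(\pi)$ appears in $\CC$. Second, $E$ being a partial permutation matrix forces $\|E\|_2 \leq 1$, with equality as soon as at least one blue cycle is present.

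The two inequalities then follow from sub-additivity and the reverse triangle inequality of the operator norm. The upper bound reads $\|\GG_\CC\|_2 \leq \|D\|_2 + \|E\|_2 \leq \|A_i(\pi)\|_2 + 1$. For the lower bound, I would combine two estimates: the reverse triangle inequality gives $\|\GG_\CC\|_2 \geq \|D\|_2 - \|E\|_2 \geq \|A_i(\pi)\|_2 - 1$, while the trivial entry bound $\|\GG_\CC\|_2 \geq \max_{r,c}|(\GG_\CC)_{r,c}| \geq 1$ -- valid because $\GG_\CC$ contains at least one blue edge of modulus one -- provides the other branch of the maximum. Taking the maximum of the two lower estimates yields the stated bound.

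The main obstacle I anticipate is the monotonicity of $i \mapsto \|A_i(\pi)\|_2$: this is what turns the generic block bound $\max_j \|A_{k_j}(\pi)\|_2$ into the single $\|A_i(\pi)\|_2$ indexed by the largest degree present. This reduces to a finite computation of the largest singular value of each explicit gadget of \autoref{sec:example}, which is transparent for $A_1(\pi)$ and $A_2(\pi)$ but somewhat tedious for $A_3(\pi)$ because of its mixed complex entries. Everything else in the argument is bookkeeping about the block structure of $\GG_\CC$ together with two applications of the triangle inequality.
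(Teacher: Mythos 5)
Your proposal is correct and follows essentially the same route as the paper: the same splitting of $\GG_\CC$ into a block-diagonal gadget part plus a partial-permutation connector part, the same block-wise SVD argument for the gadget norm, and the same triangle/reverse-triangle inequalities. Your justification of the constant $1$ in the lower bound via the entry bound $\|\GG_\CC\|_2 \geq \max_{r,c}\abs{(\GG_\CC)_{r,c}}$ is in fact slightly more robust than the paper's remark (which only invokes the gate-free case), and your flagged obstacle -- verifying monotonicity of $i \mapsto \|A_i(\pi)\|_2$ -- is indeed the finite computation the paper disposes of by direct numerical evaluation.
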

\begin{proof}
    First, we show that for a block diagonal matrix $ A =
    \diag\left(\{ A^{(i)}\}_{i=1}^k\right)$, its $2$-norm is the maximum of the
    $2$-norms of its blocks. Using the identity of
    \autoref{eq:normSingularValue}, the singular value decomposition of $ A$ is
    performed individually on each block $ A^{(k)}$, yielding ${ A^{(k)} =
    U^{(k)}\Sigma^{(k)}V^{(k)*}}$. The singular values of $ A$ are therefore the
    union of the singular values of the blocks and in particular
    \begin{equation}\label{eq:normBlockMatrix}
        \|A\|_2
            = \max_{i} (\Sigma)_{ii}
            = \max_{k} \max_{j} (\Sigma^{(k)})_{jj}
            = \max_{k} \|A^{(k)}\|_2.
    \end{equation}

    For clarity, we write $A$ the adjacency matrix of $\GG_\CC$. The upper
    bound is obtained by writing ${A = A_g + A_c}$, where $A_g$ is the adjacency
    matrix of gadgets and $A_c$ the adjacency matrix of \emph{connectors}, \ie
    of edges corresponding the cycles encoding the variables as they connect
    together the gadgets. The matrix $A_g$ is a block diagonal matrix, where the
    blocks are the $A_i(\pi)$'s, and $A_c$ contains only $1$'s (and at most a
    single $1$ per row). Observe that
    $\|A_3(\pi)\|_2>\|A_2(\pi)\|_2>\|A_1(\pi)\|_2$. As such, the triangle
    inequality gives
    \begin{equation}
        \|A_g\|_2 - \|A_c\|_2 \leq  \|A\|_2 \leq \|A_g\|_2 + \|A_c\|_2.
    \end{equation}
    By definition, $A_g$ is a block diagonal matrix written $A_g =
    \diag\left(\{A_i^{(k)}(\pi)\}_{k=1}^n\right)$, with $n$ the number of gates
    in $\CC$. From \autoref{eq:normBlockMatrix}, $\|A_g\|_2$ = $\|A_i(\pi)\|_2$
    with the largest $i$ such that $\CC$ contains the gate $P_i(\pi)$. The
    matrix $A_c$ is, by construction and up to rows/columns of zeros, a
    permutation of the identity matrix. Thus $A_c$ is similar to a diagonal
    matrix with 0s and 1s (and it's not the null matrix) hence $\|A_c\|_2=1$.
    The $\max$ of the left-hand side of \autoref{eq:2normUpperBound} is obtained
    by observing that if $\CC$ contains no gate, then $A$, the adjacency matrix
    of $\GG_\CC$ is the idendity and therefore $\|A\|_2 = 1$.
\end{proof}
The numerical values of for $\|A_i(\pi)\|_2$ are -- rounded to two decimal
places -- 1, 1.73, and 3.53 for $i=0,1,2$ respectively. Hence, in all
generality, $1 \leq \|\GG_\CC\|_2 \leq 4.53 $ for all $\CC$. This result
utterly aborts the hope for efficient classical strong simulation raised at the
end of \cite{rudolph_simple_2009} since the norm of the adjacency matrix is
lower bounded by 1 which prevents the use of Gurvit's algorithm. In practice,
the numerical results suggest that $\|\GG_\CC\|_2$ grows rather slowly with
respect to $q$, see \autoref{fig:sigmaMaxAvg} and in particular it seems
\autoref{eq:2normUpperBound} is not tight. 

At first sight of \autoref{eq:permanentEstimate}, one could argue that the
singular values of the adjacency matrix can be lowered by considering the rescaled version
$\frac{1}{2^{\frac q M}}\GG_\CC$. This way,
\begin{equation}
    \frac{1}{2^q}\per{\GG_\CC} = \per{\frac{1}{2^{\frac q M}}\GG_\CC},
\end{equation}
therefore its norm becomes
\begin{equation}
\begin{aligned}
    \left\|\frac{1}{2^{\frac q M}}\GG_\CC\right\|_2
        = \frac{1}{2^{\frac{q}{M}}} \|\GG_\CC\|_2
        \leq \|\GG_\CC\|_2.
\end{aligned}
\end{equation}
However, $M = \bigo{qd}$. Assuming the circuits are not of constant depth, then
\begin{equation}
    \lim_{q \rightarrow \infty} \frac{1}{2^{\frac q M}} = 1,
\end{equation}
which means both graphs will have roughly the same norm. Moreover, the
finite statistics error will be the same in the two cases.

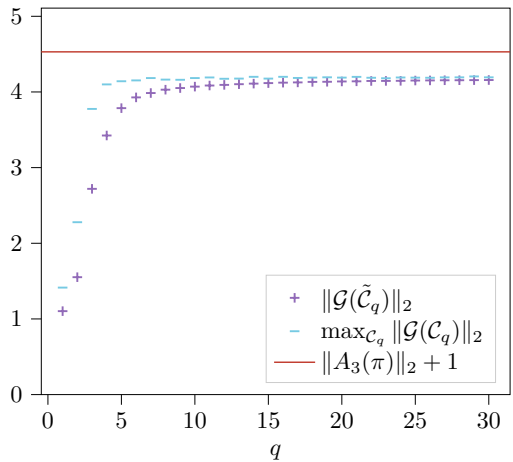
\begin{figure}[hbtp!]
    \centering
    % This file was created with tikzplotlib v0.10.1.
\begin{tikzpicture}[scale=.9]

    \definecolor{darkgray176}{RGB}{176,176,176}
    % \definecolor{ultramarine!80}{RGB}{255,127,14}
    % \definecolor{ultramarine!80}{RGB}{44,160,44}
    \definecolor{lightgray204}{RGB}{204,204,204}
    % \definecolor{ultramarine!80}{RGB}{31,119,180}

    \begin{axis}[
    legend cell align={left},
    legend style={
      fill opacity=0.8,
      draw opacity=1,
      text opacity=1,
      at={(0.97,0.03)},
      anchor=south east,
      draw=lightgray204
    },
    tick align=outside,
    tick pos=left,
    x grid style={darkgray176},
    xlabel={\(\displaystyle q\)},
    xmin=-0.45, xmax=31.45,
    xtick style={color=black},
    y grid style={darkgray176},
    ymin=0, ymax=5.108088060406,
    ytick style={color=black}
    ]

    \addplot [draw=violet, fill=violet, mark=+, only marks, thick]
    table{%
    x  y
    1 1.10297349160595
    2 1.55182710281623
    3 2.71939794638946
    4 3.42456176047556
    5 3.78609022794239
    6 3.92811351565257
    7 3.98572214941026
    8 4.03005102343166
    9 4.05170053642777
    10 4.07047741714492
    11 4.08510677685372
    12 4.09346671181607
    13 4.10268217686957
    14 4.1103112931063
    15 4.11717389177935
    16 4.12296797069949
    17 4.12627467115837
    18 4.13289462740145
    19 4.13464170168475
    20 4.13869906678292
    21 4.14051466313216
    22 4.14535361108883
    23 4.14586713954592
    24 4.14725415173064
    25 4.15102586958022
    26 4.15249748788067
    27 4.15447448428148
    28 4.15499082522827
    29 4.15643304518945
    30 4.15753877775619
    };
    \addlegendentry{$\tilde{\|\GG(\CC_q)\|_2}$}

    \addplot [draw=ultramarine!80, fill=ultramarine!80, mark=-, only marks, thick]
    table{%
        x  y
        1 1.4142135623731
        2 2.27841360949644
        3 3.77574501839474
        4 4.09978075220876
        5 4.14158031244391
        6 4.1526747046905
        7 4.18362863492965
        8 4.1633249972181
        9 4.16082084093709
        10 4.18372368628686
        11 4.1915654191797
        12 4.17431758445109
        13 4.1766343337935
        14 4.20008847330494
        15 4.17658771363457
        16 4.20016230016505
        17 4.18513518237572
        18 4.18965989938496
        19 4.19367221582787
        20 4.18972564663077
        21 4.1993536433023
        22 4.18759553573544
        23 4.18116621898238
        24 4.19119730892492
        25 4.18907326797983
        26 4.18759069646778
        27 4.19267319916028
        28 4.19300239404689
        29 4.20525060737028
        30 4.19331703862073
    };
    \addlegendentry{$\max_{\CC_q}\|\GG(\CC_q)\|_2$}
    % \addlegendimage{color=ultramarine!80,mark=|,
    % mark options={mark repeat=2,mark phase=1}}
    % \addlegendentry{range of $\|\GG(\CC_q)\|_2$}

    % \addlegendentry\exact computation of thesigma_{max}(A)$}
    \addplot [semithick, darkred]
    table {%
    -0.45 4.53
    31.45 4.53
    };
    \addlegendentry{$\|A_3(\pi)\|_2 + 1$}

    \end{axis}

\end{tikzpicture}
    \caption{For each $1 \leq q \leq 30$, we computed the spectral norm of
    $5000$ graphs encoding the zero-zero amplitude of randomly chosen
    \emph{square} IQP-1 circuits (of depth ${d = q}$). For each $q$, $\tilde
    {\|\GG(\CC_q)\|_2}$ is the sample average of the spectral norm, the
    horizontal bar show the maximum value of $\|\GG(\CC_q)\|_2$ observed. All
    are indeed below the line $y = 4.53$ according to
    \autoref{pr:normUpperBound}.}
    \label{fig:sigmaMaxAvg}
\end{figure}

Using \autoref{eq:permanentEstimate} and according to the strong law of large
numbers, picking a uniformly random circuit $\CC \in \II_q$ and letting $\Ng$
going to infinity, define
\begin{equation}
    % \mu_\GG := \e{\frac{\abs{\per{\GG_\CC}}^2}{\|\GG_\CC\|^{2M}_2}}.
    \mu_\GG := \lim_{\Ng\rightarrow \infty} \frac{\Ngp}{\Ng},
\end{equation}
with
\begin{equation}
    \lim_{\Ng\rightarrow \infty} \frac{\Ngp}{\Ng}\frac{\|\GG_\CC\|^{2M}_2}{2^{2q}}
    % = \e[\CC \sim \UUU(\II_q)]{
    =     \abs{\braket{0^{\otimes q} | \CC | 0^{\otimes q}}}^2.
        % } = \frac{1}{2^q}.
\end{equation}
It naturally follows from \autoref{eq:permanentEstimate} that
\begin{equation}
    \est{\abs{\braket{0^{\otimes q}|\CC|0^{\otimes q}}}^2}
    = \frac{\|\GG_\CC\|^{2M}_2}{2^{2q}}\frac{\Ngp}{\Ng}.
\end{equation}
We define the random variables $X_1, \cdots, X_N$, such that $X_i =
\frac{1}{2^{2q}}\|\GG_\CC\|^{2M}_2$ if we select the $i$-th sample, and $X_i =
0$ otherwise, so that
\begin{equation}
\frac{1}{\Ng}\sum_{i=1}^{\Ng} X_i = \est{\abs{\braket{0^{\otimes q}|\CC|0^{\otimes q}}}^2}.
\end{equation}
Accordingly, Hoeffding's inequality states that, defining
\begin{equation}\label{eq:epsGPrime}
\epsilon_\GG' := \frac{\|\GG_\CC\|^{2M}_2}{2^{2q}}\epsilon_\GG,
\end{equation}
the probability that the estimate of the zero-zero probability of a circuit
$\CC$ deviates from its actual value is
\begin{equation}\label{eq:prBadProbaEstimate}
    \pr{
            \left|\frac{1}{\Ng}\sum_{i=1}^{\Ng} X_i- \abs{\braket{0^{\otimes q}|\CC|0^{\otimes q}}}^2 \right|
            \geq \epsilon_\GG' % \frac{\|\GG\|^{2M}_2}{2^{2q}}\mu_{\GG_\CC}
        }
        \leq 2 e^{-2\Ng\epsilon_\GG^2}.
\end{equation}

\subsubsection{Post-selection for the KLM scheme.}
\label{app:shotsKLM}

In the same fashion, we can estimate the sampling accuracy of the implementation
of $\CC$ using the KLM scheme. Let ${p_{s,\CC} :=  p_{s,
\textsc{z}}^{\#(Z)}p_{s, \textsc{cz}}^{\#(CZ)}p_{s, \textsc{ccz}}^{\#(CCZ)}}$ be
the probability that all the gates of the circuit $\CC$ succeed. We have
\begin{equation}\label{eq:postSelectionRateKLMPs}
    \begin{aligned}
        \mu_{\textsc{klm}} := \lim_{\Nk\rightarrow \infty} \frac{\Nkp}{\Nk}
        % = \e[\CC \sim \UUU(\II_q)]{
        =    \abs{\braket{0^{\otimes q} | \CC | 0^{\otimes q}} }^2 p_{s, \CC},
        % } \\
    \end{aligned}
\end{equation}
% \e[\CC_{KLM}]{\abs{\braket{0^{\otimes q} | \CC_{KLM} | 0^{\otimes q}} }^2}
where, in this case, $\Nkp$ is the number of samples with zeros in all the $q$
logical qubits and with exactly one photon in each supplementary physical mode.
For on-chip linear optical computation, it is convenient to consider the
dual-rail encoding of the qubits. The major downside of linear optical quantum
computation is the intrinsicly probabilistic nature of multi-qubit operations,
which results in an exponential drop of the success probability of the
computation. Indeed $p_{s, \textsc{z}} = 1$ as a phase is applied using a single
phase shifter. We consider the Knill heralded CZ gate of
\cite{knill_quantum_2002}, which succeeds with probability $p_{s, \textsc{cz}}
:= \frac{2}{27}$. It requires two additional modes, each filled with a single
photon. The gate is applied successfully if and only if one photon is measured
at the output of each ancillary mode. A post-selected CZ gate, such as
\cite{hofmann_quantum_2002}, has a better success probability but necessitates
to post-select on having a precise pattern in the output modes which prevent the
composition of several post-selected gates. Also, note that already in
\cite{knill_scheme_2001} was proposed a technique to perform near-deterministic
CZ gates by utilizing large and highly entangled ancillary states and performing
state teleportation, but the creation of such resource states is far beyond the
reach of near-term hardware. As the induced resource overhead is nontrivial, we
consider access to merely ancillary single photons. The CCZ gate can be
performed using 6 CZ gates \cite{nielsen_quantum_2010}. The success probability
of the CCZ gate in linear optics is therefore $p_{s, \textsc{ccz}} :=
\left(\frac{2}{27}\right)^6$. It has been shown in previous work that $n$-qubit
Toffoli gates require at least $2n$ CNOT gates when implemented without
ancillary qubits \cite{shende_cnot-cost_2008}. Similarly, more
resource-efficient Toffoli gates have been proposed, at a cost of more involved
structures such as qudits \cite{ralph_efficient_2007} or cross-Kerr nonlinearity
\cite{lin_quantum_2009}, but we will restrict ourselves to the standard
Toffoli/CCZ implementation. Consequently, the success probability of the circuit
$\CC$, \ie the probability that all heralded modes are filled with exactly one
photon, is
\begin{equation}
    p_{s,\CC} = \left(\frac{2}{27}\right)^{\#(CZ)+6\#(CCZ)}.
\end{equation}
Define the random variables $Y_1, \cdots, Y_N$, such that $Y_i =1$ if the $i$-th
sample is post-selected, and $Y_i = 0$ otherwise. According to
\autoref{eq:postSelectionRateKLMPs}, finite experiments yield an estimate
\begin{equation}
    \frac{1}{\Nk}\sum_{i=1}^{\Nk} Y_i = \est{\abs{\braket{0^{\otimes q} | \CC | 0^{\otimes q}} }^2 p_{s, \CC}}.
\end{equation}
Since $p_{s, \CC}$ is known for a given circuit $\CC$ (or at least computable in
poly-time), the estimation of the zero-zero probability of a circuit $\CC$ one
obtains from sampling the photonic device can be written
\begin{equation}
    \est{\abs{\braket{0^{\otimes q}|\CC|0^{\otimes q}}}^2} =\frac{1}{p_{s,\CC}} \frac{\Nkp}{\Nk}.
\end{equation}
As in the previous subsection, we use the rescaled random variables ${Z_i =
\frac{1}{p_{s,\CC}}Y_i}$ to build the estimate. Similarly to
\autoref{eq:epsGPrime}, define
\begin{equation}
    \epsilon_{\textsc{klm}}' := \frac{\epsilon_{\textsc{klm}}}{{p_{s,\CC}}},
\end{equation}
and the error in the desired estimate is given by Hoeffding's inequality to be
\begin{equation}\label{eq:prBadProbaEstimateKLM}
    \hspace{-.3cm}\pr{
            \left|\frac{1}{\Nk}\sum_{i=1}^{\Nk}Z_i - \abs{\braket{0^{\otimes q}|\CC|0^{\otimes q}}}^2 \right|
            \geq  \epsilon_{\textsc{klm}}' % \frac{\mu_{KLM}}{p_{s,CZ}^{\#(CZ)}p_{s,CCZ}^{\#(CCZ)}}
        }
        \leq 2e^{-2\Nk\epsilon_{\textsc{klm}}^2}.
\end{equation}

To connect the two results, we shall prove the following
\autoref{th:schemesComparison}.
\begin{lemma}\label{th:schemesComparison} Let $q \in
    \mathbb N$. Let $\II_q$ be the set of $q$-qubit IQP circuits whose diagonal
    gates are drawn at random from the set $\{Z, CZ, CCZ\}$. Fix some desired
    accuracy $\epsilon_0$ and confidence $\delta$. Let $\Ngstar, \Nkstar \in
    \mathbb N$ be the number of samples needed to achieve an estimate
    $\epsilon_0$ close from the true probability with probability $1-\delta$
    using the graph encoding technique and the KLM scheme respectively. Then,
    there is a function $\alpha : \II_q \rightarrow \mathbb R$ of the number of
    gates and qubits of $\CC$, such that
    \begin{equation}
        \Ngstar = \alpha(\CC) \Nkstar.
    \end{equation}
\end{lemma}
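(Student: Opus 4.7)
The plan is to invert the Hoeffding bounds already written down for the two schemes in \autoref{eq:prBadProbaEstimate} and \autoref{eq:prBadProbaEstimateKLM}, and then take their ratio. Requiring that the right-hand side of each bound be at most $\delta$ and solving for the sample size gives the minimal shot counts
\begin{equation*}
\Ngstar \;=\; \left\lceil \frac{\log(2/\delta)}{2\,\epsilon_\GG^{2}}\right\rceil, \qquad \Nkstar \;=\; \left\lceil \frac{\log(2/\delta)}{2\,\epsilon_{\textsc{klm}}^{2}}\right\rceil,
\end{equation*}
where $\epsilon_\GG$ and $\epsilon_{\textsc{klm}}$ are the scheme-dependent raw accuracies appearing inside the Hoeffding exponents.

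Next I would match the target accuracy $\epsilon_0$ on the physical quantity of interest $|\braket{0^{\otimes q}|\CC|0^{\otimes q}}|^2$ common to both estimators. Using \autoref{eq:epsGPrime} and its KLM analogue, this means setting $\epsilon_\GG' = \epsilon_{\textsc{klm}}' = \epsilon_0$, which fixes the raw accuracies as
\begin{equation*}
\epsilon_\GG \;=\; \frac{2^{2q}\,\epsilon_0}{\|\GG_\CC\|_2^{2M}}, \qquad \epsilon_{\textsc{klm}} \;=\; p_{s,\CC}\,\epsilon_0 .
\end{equation*}
Substituting back into the two expressions for $\Ngstar$ and $\Nkstar$, the common $\log(2/\delta)$ and $\epsilon_0^{-2}$ factors cancel and one reads off
\begin{equation*}
\alpha(\CC) \;:=\; \frac{\Ngstar}{\Nkstar} \;=\; \frac{p_{s,\CC}^{\,2}\,\|\GG_\CC\|_2^{4M}}{2^{4q}}.
\end{equation*}

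To conclude, I would argue that $\alpha$ is controlled by structural data of $\CC$ only: $p_{s,\CC}$ was given in closed form in terms of $\#(CZ)$ and $\#(CCZ)$; the number of modes $M$ depends on $q$ and on the gate counts $\#(P_k(\theta))$ via \autoref{pr:numberOfNodes}; and $\|\GG_\CC\|_2$ lies between two $\CC$-independent constants by \autoref{pr:normUpperBound}. The only mildly delicate point in the argument, and the one I would guard against most carefully, is the rescaling inside Hoeffding's inequality: one has to apply it to the \emph{rescaled} variables $X_i$ (graph scheme) and $Z_i = Y_i/p_{s,\CC}$ (KLM), whose range $[a,b]$ is precisely the conversion factor between the raw accuracies $\epsilon_\GG, \epsilon_{\textsc{klm}}$ and the physical accuracies $\epsilon_\GG', \epsilon_{\textsc{klm}}'$. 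Once this bookkeeping is done consistently with the definition \autoref{eq:epsGPrime}, the rest of the proof reduces to direct substitution.
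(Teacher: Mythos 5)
Your proposal is correct and follows essentially the same route as the paper: both apply Hoeffding's inequality to the rescaled estimators, equate the physical accuracies $\epsilon_\GG'=\epsilon_{\textsc{klm}}'=\epsilon_0$ via \autoref{eq:epsGPrime} and its KLM analogue, and arrive at the same $\alpha(\CC)=p_{s,\CC}^{2}\,\|\GG_\CC\|_2^{4M}/2^{4q}$; the only cosmetic difference is that you invert the bounds to get explicit formulas for $\Ngstar$ and $\Nkstar$ before taking the ratio, whereas the paper equates the two accuracy expressions and solves directly for $\Ngstar/\Nkstar$. (Your ceilings, while more careful, would technically spoil the exact equality claimed in the lemma, but the paper is equally loose on this point.)
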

\begin{proof}
    We applied Hoeffding's inequality to the two sampling strategies in
    \autoref{eq:prBadProbaEstimate,eq:prBadProbaEstimateKLM} to derive the
    number of samples required to estimate of the zero-zero probability of a
    given circuit to particular accuracy $\epsilon_\GG'$ and
    $\epsilon_{\textsc{klm}}'$. We write, thanks to
    \autoref{eq:epsilonHoeffding},
    \begin{equation}\label{eq:epsilonGG}
        \epsilon_\GG = \sqrt{\frac{1}{2\Ng}\log\frac{2}{\delta}},
    \end{equation}
    and
    \begin{equation}\label{eq:epsilonKLM}
        \epsilon_{\textsc{klm}} = \sqrt{\frac{1}{2\Nk}\log\frac{2}{\delta}}.
    \end{equation}
    Ultimately, one wants to find which of the two techniques requires the
    least number of samples to achieve a given accuracy $\epsilon_0$ with
    probability $1-\delta$. Precisely, this means finding values of
    $\Ngstar$ and $\Nkstar$ such that
    \begin{equation}
        \epsilon_0 = \epsilon_\GG' = \epsilon_{\textsc{klm}}'.
    \end{equation}
    Plugging \autoref{eq:epsilonGG,eq:epsilonKLM} into the previous equation
    yields
    \begin{equation}\label{eq:errorEqualityFromN}
        \frac{\|\GG_\CC\|^{2M}_2}{2^{2q}}\sqrt{\frac{1}{2\Ngstar}\log\frac{2}{\delta}}
            = \frac{1}{{p_{s,\CC}}}\sqrt{\frac{1}{2\Nkstar}\log\frac{2}{\delta}}.
    \end{equation}
    where we recall that
    \begin{equation} 
        M = q + \#(P_0(\theta)) + 3\#(P_1(\theta)) + 5\#(P_2(\theta))
    \end{equation}
    from \autoref{pr:numberOfNodes} and
    \begin{equation}
        p_{s,\CC} = \left(\frac{2}{27}\right)^{\#(CZ)+6\#(CCZ)}.
    \end{equation}
    Then \autoref{eq:errorEqualityFromN} simplifies to
    \begin{equation}
        \Ngstar= \frac{p_{s,\CC}^2\cdot\|\GG_\CC\|^{4M}_2}{2^{4q}} \Nkstar.
    \end{equation}
    Hence, define
    \begin{equation}
        \alpha(\CC) := \frac{p_{s,\CC}^2\cdot\|\GG_\CC\|^{4M}_2}{2^{4q}}
    \end{equation}
    and the proof is complete.
\end{proof}
The natural next step is to investigate the class of circuits $\CC$ for which
$\alpha(\CC) < 1$ implying that the graph encoding is more efficient to estimate
a given zero-zero probability than the KLM scheme with respect to the number of
samples requires. Pursuing this direction we have the following
\autoref{lm:cczCondition}.

\begin{lemma}\label{lm:cczCondition} Let $\CC \in
    \II_q$ be a $q$-qubit IQP circuit, and let $\alpha : \II_q \rightarrow
    \mathbb R$ as per \autoref{th:schemesComparison}. Recall that $\#(\UU)$ is
    defined as the number of occurrences the gates $\UU$ in $\CC$. Then,
    $\alpha(\CC) < 1$ if and only if
    \begin{equation}\label{eq:cczCondition}
        \#(CCZ) > \lceil 5.68\#(Z) + 12.14 \#(CZ) + 3.07 q \rceil.
    \end{equation}
\end{lemma}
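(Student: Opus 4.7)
My plan is to start directly from the closed form for $\alpha(\CC)$ given by \autoref{th:schemesComparison}, namely
\begin{equation*}
\alpha(\CC) = \frac{p_{s,\CC}^{\,2}\,\|\GG_\CC\|_2^{4M}}{2^{4q}},
\end{equation*}
and turn the inequality $\alpha(\CC) < 1$ into a linear inequality in the gate counts by taking $\log_2$. The goal is then to isolate $\#(CCZ)$ and compare the coefficients to those in \autoref{eq:cczCondition}.

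First, I would substitute the explicit formulas available earlier in the paper: ${p_{s,\CC} = (2/27)^{\#(CZ)+6\#(CCZ)}}$ and, from \autoref{pr:numberOfNodes} (appropriately re-indexed to the $\{Z,CZ,CCZ\}=\{P_0(\pi),P_1(\pi),P_2(\pi)\}$ gate set), $M = q + \#(Z) + 3\#(CZ) + 5\#(CCZ)$. Taking $\log_2$ of $\alpha(\CC) < 1$ then yields
\begin{equation*}
4M\log_2\|\GG_\CC\|_2 + 2(\#(CZ)+6\#(CCZ))\log_2\!\tfrac{2}{27} < 4q,
\end{equation*}
which is a linear inequality in $q, \#(Z), \#(CZ), \#(CCZ)$ once $\|\GG_\CC\|_2$ is pinned down to a number.

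Next I would invoke \autoref{pr:normUpperBound}, which bounds $\|\GG_\CC\|_2 \le \|A_3(\pi)\|_2 + 1$ whenever $\CC$ contains a CCZ; plugging in the numerical value $\|A_3(\pi)\|_2 + 1 \approx 4.53$ gives an explicit numerical coefficient ${4\log_2\|\GG_\CC\|_2}$ multiplying $M$. Expanding $4M\log_2 \|\GG_\CC\|_2$ and collecting terms by gate type, every gate count acquires a fixed numerical coefficient. Moving the $\#(CCZ)$-terms to one side and dividing by its (positive) coefficient produces a bound of the form $\#(CCZ) > c_Z\#(Z) + c_{CZ}\#(CZ) + c_q\,q$; computing $c_Z \approx 5.68$, $c_{CZ} \approx 12.14$, $c_q \approx 3.07$ (from the numerical values of $\log_2(27/2)$ and $\log_2 4.53$) and taking the ceiling (since $\#(CCZ)\in\mathbb{N}$) yields exactly \autoref{eq:cczCondition}. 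The ``if'' direction simply reverses this chain of equivalences.

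The only delicate point is that the coefficients in \autoref{eq:cczCondition} depend on the particular numerical value chosen for $\|\GG_\CC\|_2$; the upper bound from \autoref{pr:normUpperBound} is loose in general (as \autoref{fig:sigmaMaxAvg} illustrates empirically), so the resulting threshold is a \emph{sufficient} condition in the strictest reading. The main subtlety is therefore stating the proof so that the numerical bound is used consistently on both sides of the equivalence, which I would handle by fixing once and for all $\|\GG_\CC\|_2 = \|A_3(\pi)\|_2 + 1$ for circuits containing at least one CCZ (the only regime where the condition can be satisfied, given the size of $c_{CCZ}$).
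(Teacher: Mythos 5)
Your proposal follows essentially the same route as the paper's proof: both substitute the closed form $\alpha(\CC)=p_{s,\CC}^{2}\varsigma^{4M}/2^{4q}$, fix $\varsigma=\|\GG_\CC\|_2$ at the numerical bound $4.53$, take logarithms, and isolate $\#(CCZ)$ to obtain the coefficients $5.68$, $12.14$, $3.07$ (the paper works in base $\gamma=p^{12}\varsigma^{20}$ rather than base $2$, which is just your division step in disguise). Your caveat that the equivalence only holds once $\varsigma$ is pinned to $4.53$ — so that with the true, generally smaller norm the condition is merely sufficient — is accurate and is in fact glossed over in the paper, which simply states ``we assume that $\varsigma = 4.53$.''
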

\begin{proof}
    We write, for the sake of conciseness, $x,\ y$ and $z$ instead of $\#(Z),\
    \#(CZ)$ and $\#(CCZ)$ respectively. Recall from
    \autoref{th:schemesComparison} that for a given circuit $\CC$, writing
    ${\varsigma := \|\GG_\CC\|_2}$ and ${p:= \frac{2}{27}}$
    \begin{equation}
        \begin{aligned}
            \alpha(\CC)
                = \frac{p^{2y + 12z}\varsigma^{4q + 4x + 12y + 20z}}{2^{4q}}
                = \frac{
                        \varsigma^{4q}
                        \varsigma^{4x}
                        \left(p^{2}\varsigma^{12}\right)^y
                        \left(p^{12}\varsigma^{20}\right)^z
                    }{2^{4q}}.
        \end{aligned}
    \end{equation}
    From numerical calculation of the bound of \autoref{pr:normUpperBound},
    ${\varsigma \lesssim 4.53}$. Hence, in order to give figures, we assume that
    ${\varsigma = 4.53}$. We use the notation $\log_a(b)$ to denote the
    logarithm of $b$ in base $a$ and $\ln$ denotes the natural logarithm. Recall
    that we change the logarithm base from $a$ to $c$ using the identity
    \begin{equation}
        \log_a(b) = \frac{\log_c(b)}{\log_c(a)}.
    \end{equation}
    The condition $\alpha(\CC) < 1$ reads
    \begin{equation}\label{eq:alphaConditionRewritten}
        \frac{
            \varsigma^{4q}\varsigma^{4x}
            \left(p^{2}\varsigma^{12}\right)^y
            \left(p^{12}\varsigma^{20}\right)^z
        }{2^{4q}} < 1.
    \end{equation}
    Set $\gamma := p^{12}\varsigma^{20} < 1$,
    \autoref{eq:alphaConditionRewritten} rewrites
    \begin{equation}\label{eq:zLowerBoundLogGamma}
        z > - \log_\gamma\left(\left(\tfrac{\varsigma}{2}\right)^{4q}\right)
            - \log_\gamma\left(\varsigma^{4x}\right)
            - \log_\gamma\left(\left(p^{2}\varsigma^{12}\right)^{y}\right).
    \end{equation}
    We evaluate each term of the right-hand side of
    \autoref{eq:zLowerBoundLogGamma} separately. First,
    \begin{equation}\label{eq:approxQ}
        \log_\gamma\left(\left(\tfrac{\varsigma}{2}\right)^{4q}\right)
        = 4q\frac{\ln\varsigma - \ln 2}{\ln\gamma}
        \approx -3.07q.
    \end{equation}
    Second,
    \begin{equation}\label{eq:approxX}
        \log_\gamma\left(\varsigma^{4x}\right)
        = 4x\frac{\ln\varsigma}{\ln\gamma}
        \approx -5.68x.
    \end{equation}
    Last,
    \begin{equation}\label{eq:approxY}
        \log_\gamma\left(\left(p^{2}\varsigma^{12}\right)^{y}\right)
        = y\frac{2\ln p + 12\ln\varsigma}{\ln\gamma}
        \approx -12.14y.
    \end{equation}
    Plugging \autoref{eq:approxQ,eq:approxX,eq:approxY} into
    \autoref{eq:zLowerBoundLogGamma} concludes the proof.
\end{proof}
It remains to compute the probability that, for a fixed number of qubits $q$, a
uniformly random circuit $\CC \in \II_q$ satisfies the condition of the previous
\autoref{lm:cczCondition}, which summarizes \autoref{th:sizeSq}.

% \begin{theorem}\label{th:sizeSq}  Drawing a
%     circuit uniformly at random from $\II_q$ implies that the bound of
%     \autoref{lm:cczCondition} holds with probability
%     \begin{equation}
%         \pr{\alpha(\CC) < 1} = \frac{\abs{S_q}}{\abs{\II_q}}.
%     \end{equation}
%     Furthermore,
%     \begin{equation}
%         \lim_{q \rightarrow \infty} \frac{\abs{S_q}}{\abs{\II_q}} = 1.
%     \end{equation}
% \end{theorem}
% \samples*
\begin{theorem}\label{th:sizeSq} Let $\CC$ be a $q$-qubit IQP,
    fix $\epsilon > 0$ and let $\Ng$, $\Nk$ the number of samples required to
    estimate $\braket{0^{\otimes q} | \CC | 0^{\otimes q}}$ using the graph
    encoding and the nonadaptive KLM scheme respectively. Then we can write 
    $
        {\Ng = \alpha(\CC) \Nk,}
    $
    and, for large enough $q$, we have
    $
        \pr{\alpha(\CC) < 1} \approx 1.
    $
\end{theorem}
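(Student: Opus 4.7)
The first assertion $\Ng = \alpha(\CC) \Nk$ is simply the content of \autoref{th:schemesComparison}, so the only thing to prove is $\pr{\alpha(\CC) < 1} \to 1$ as $q \to \infty$. I would reduce this directly to the combinatorial condition of \autoref{lm:cczCondition}, namely
\begin{equation*}
    \#(CCZ) > \lceil 5.68 \#(Z) + 12.14 \#(CZ) + 3.07 q \rceil,
\end{equation*}
so that the remaining task is entirely a concentration-of-measure argument on the gate counts of a uniformly random element of $\II_q$.

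The plan is to observe that because $Z^2=CZ^2=CCZ^2=I$ and the gates in an IQP circuit's diagonal part commute, sampling uniformly from $\II_q$ is equivalent to independently including each potential gate (one for each subset of qubits of size $1$, $2$ or $3$) with probability $1/2$. Under this model, $\#(Z) \sim \mathrm{Bin}(q,\tfrac{1}{2})$, $\#(CZ) \sim \mathrm{Bin}(\binom{q}{2},\tfrac{1}{2})$ and $\#(CCZ) \sim \mathrm{Bin}(\binom{q}{3},\tfrac{1}{2})$, and these three random variables are mutually independent. In particular
\begin{equation*}
    \e{\#(CCZ)} = \tfrac{1}{2}\binom{q}{3} = \Theta(q^3), \qquad \e{\#(CZ)} = \Theta(q^2), \qquad \e{\#(Z)} = \Theta(q).
\end{equation*}
The expected value of the left-hand side of the inequality is $\Theta(q^3)$ while the expected value of the right-hand side is $5.68 \cdot \tfrac{q}{2} + 12.14 \cdot \tfrac{1}{2}\binom{q}{2} + 3.07 q = \Theta(q^2)$, so in expectation the condition is satisfied by a margin that grows polynomially in $q$.

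To turn this into a high-probability statement I would apply Hoeffding's inequality separately to each of the three binomial counts: for any $t>0$,
\begin{equation*}
    \pr{\big|\#(\UU) - \e{\#(\UU)}\big| \geq t} \leq 2 \exp\!\left(-\tfrac{2 t^2}{N_\UU}\right),
\end{equation*}
where $N_\UU$ is the corresponding number of trials ($q$, $\binom{q}{2}$ or $\binom{q}{3}$). Choosing $t = q^{7/4}$ (anything strictly between $q^{3/2}$ and $q^2$) gives: $\#(CCZ) \geq \tfrac{1}{2}\binom{q}{3} - q^{7/4}$ and $\#(CZ) \leq \tfrac{1}{2}\binom{q}{2} + q^{7/4}$ and $\#(Z) \leq \tfrac{q}{2} + q^{7/4}$, each with probability $1 - 2e^{-\Omega(q^{1/2})}$. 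On this good event the left-hand side of the inequality is $\tfrac{q^3}{12} - O(q^{7/4})$ while the right-hand side is $6.07\, q^2 + O(q^{7/4})$, which for $q$ large enough is strictly smaller. A union bound over the three deviation events then yields $\pr{\alpha(\CC) < 1} \geq 1 - 6 e^{-\Omega(q^{1/2})} \to 1$.

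The only genuinely subtle point, and therefore where I would be most careful, is identifying the correct uniform distribution on $\II_q$ and justifying that the three gate counts are independent binomials with the right parameters --- once that is granted, the rest is a routine Hoeffding calculation in which the cubic/quadratic gap between the growth rates of the two sides makes the deviation threshold easy to choose.
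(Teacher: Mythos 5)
Your proof is correct, and the overall structure mirrors the paper's: the identity $\Ng = \alpha(\CC)\Nk$ is quoted from \autoref{th:schemesComparison}, and the probability statement is reduced to showing that the gate-count inequality of \autoref{lm:cczCondition} holds for almost all circuits because $\#(CCZ)$ grows like $q^3$ while the threshold grows like $q^2$. Where you diverge is in the technical tool. The paper counts the complement set $S_q^c$ directly as a triple sum of binomial coefficients, bounds the partial sum over $z$ by the entropy inequality $\sum_{k=0}^d \binom{n}{k}\leq 2^{nH(d/n)}$ after replacing the upper limit $t(x,y,q)-1$ by its worst case $t\bigl(q,\binom{q}{2},q\bigr)-1$, and concludes because $H$ of a ratio tending to $0$ tends to $0$. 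You instead make the product structure of the uniform measure on $\II_q$ explicit (each of the $q+\binom{q}{2}+\binom{q}{3}$ candidate gates independently present with probability $1/2$, which is exactly consistent with the paper's $\abs{\II_q}=2^{q+\binom{q}{2}+\binom{q}{3}}$ and its counting formula), apply Hoeffding to the three independent binomial counts, and union-bound. The two arguments are near-duals of one another --- the entropy bound on partial binomial sums \emph{is} a Chernoff-type tail bound for $\mathrm{Bin}(n,1/2)$ --- but your version buys an explicit convergence rate $\pr{\alpha(\CC)<1}\geq 1-6e^{-\Omega(\sqrt q)}$, whereas the paper only establishes the limit. One small arithmetic slip: $12.14\cdot\tfrac{1}{2}\binom{q}{2}\approx 3.04\,q^2$, not $6.07\,q^2$; this does not affect the conclusion since either way the right-hand side is $\Theta(q^2)$ and is dominated by $\tfrac{1}{12}q^3$.
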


\begin{proof}
    Let $S_q \subset \II_q$ be the set of circuits $\CC$ such that $\alpha(\CC)
    < 1$ and ${t(x, y, q) = \lceil 5.68x+ 12.14y + 3.07q \rceil}$ as per
    \autoref{eq:cczCondition}. We have,
    \begin{equation}
        \abs{S_q} =
            \sum_{x=0}^q
            \sum_{y=0}^{\binom{q}{2}}
            \sum_{z=t(x, y, q)}^{\binom{q}{3}}
            \binom{q}{x}\binom{\binom{q}{2}}{y}\binom{\binom{q}{3}}{z}.
    \end{equation}
    Since no closed form for the partial sum of the binomial coefficients is known,
    we use the following upper bound \cite{gottlieb_vc_2012}. Let $d \leq
    \frac{n}{2}$, then
    \begin{equation}\label{eq:binomialPartialSumUpperBound}
        \sum_{k=0}^d \binom{n}{k} \leq 2^{nH(d/n)},
    \end{equation}
    where $H$ is the binary entropy function defined as
    \begin{equation}
        H(x) := -x\log_2(x) - (1-x)\log_2(1-x).
    \end{equation}
    Consider $S^c_q$ the complement of $S_q$ in $\II_q$. Then,
    \begin{equation}\label{eq:complementS}
        \begin{aligned}
            \abs{S_q^c}
                & = \sum_{x=0}^q
                    \sum_{y=0}^{\binom{q}{2}}
                    \sum_{z=0}^{t(x, y, q)-1}
                    \binom{q}{x}\binom{\binom{q}{2}}{y}\binom{\binom{q}{3}}{z} % \\
                % & 
                \leq \sum_{x=0}^q
                    \sum_{y=0}^{\binom{q}{2}}
                    \sum_{z=0}^{t\left(q, \binom{q}{2}, q\right) -1}
                    \binom{q}{x}\binom{\binom{q}{2}}{y}\binom{\binom{q}{3}}{z}. \\
        \end{aligned}
    \end{equation}
    In order to apply the upper bound of \autoref{eq:binomialPartialSumUpperBound},
    we need $q$ such that $t\left(q, \binom{q}{2}, q\right) -1 \leq \frac{1}{2}
    \binom{q}{3}$, \ie ${q \geq 77}$. Hence, fix ${q \geq 77}$,
    \autoref{eq:complementS} rewrites
    \begin{equation}
        \begin{aligned}
            \abs{S_q^c}
                \leq 2^{q + \binom{q}{2} + \binom{q}{3} H\left(\frac{t\left(q, \binom{q}{2}, q\right) -1}{\binom{q}{3}}\right)}.
        \end{aligned}
    \end{equation}
    Now, $t\left(q, \binom{q}{2}, q\right)$ is a quadratic function of $q$ while
    $\binom{q}{3}$ is a cubic function of $q$. Hence,
    \begin{equation}
        \lim_{q \rightarrow \infty} \frac{t\left(q, \binom{q}{2}, q\right) -1}{\binom{q}{3}} = 0,
    \end{equation}
    and therefore, the binary entropy function vanishes in the limit
    \begin{equation}
        \lim_{q \rightarrow \infty} H\left(\frac{t\left(q, \binom{q}{2}, q\right) -1}{\binom{q}{3}}\right) = 0.
    \end{equation}
    Finally, using the above, we conclude the proof
    \begin{equation}
        \lim_{q \rightarrow \infty} \frac{\abs{S_q^c}}{\abs{\II_q}} = 0.
    \end{equation}
\end{proof}

Said otherwise, our scheme requires less samples than nonadaptive KLM for
estimating a certain probability. This is supported by
\autoref{fig:prAlphaLessThanOne}, which shows numerical computation of
$\pr{\alpha(\CC) < 1}$ for uniformly random $q$-qubit IQP circuits. The
computations shown in \autoref{fig:prAlphaLessThanOne} were performed using the
Julia programming language \cite{bezanson_julia_2017}, which handles arbitrary
precision arithmetic. The numerical simulation indicates that already for $q =
44$,
\begin{equation}
    \pr{\alpha(\CC_{44}) < 1} \approx 1.
\end{equation}
\begin{figure}[t]
\centering
% \scalebox{0.5}{
    % This file was created with tikzplotlib v0.10.1.
% and manually tweaked
\begin{tikzpicture}[scale=.9]

    \definecolor{darkgray176}{RGB}{176,176,176}
    \definecolor{lightgray204}{RGB}{204,204,204}
    % \definecolor{violet}{RGB}{31,119,180}

    \begin{axis}[
    legend cell align={left},
    legend style={
      fill opacity=0.8,
      draw opacity=1,
      text opacity=1,
      at={(0.03,0.97)},
      anchor=north west,
      draw=lightgray204
    },
    log basis y={10},
    tick align=outside,
    tick pos=left,
    x grid style={darkgray176},
    xlabel={\(\displaystyle q\)},
    xmin=-1.1, xmax=51.1,
    xtick style={color=black},
    y grid style={darkgray176},
    ymin=9e-19, ymax=6.69449721858594,
    ymode=log,
    ytick style={color=black},
    legend style={at={(0.97,0.14)},anchor=north east}
    ]
    \addplot [draw=violet, fill=violet, mark=+, only marks, thick]
    table{
    % 6 9.54969436861575e-12
    % 7 3.33457975533295e-10
    % 8 5.50875676382782e-11
    % 9 5.79839936670648e-12
    % 10 6.2362491288524e-13
    % 11 7.67308583991517e-14
    % 12 1.12420421212303e-14
    % 13 2.0579057206538e-15
    % 14 4.78642302394895e-16
    % 15 1.45953424631626e-16
    % 16 5.92012671966323e-17
    % 17 3.23645850544298e-17
    % 18 2.39714829130652e-17
    % 19 2.41890475943893e-17
    % 20 3.3092721668122e-17
    % 21 6.11926504779664e-17
    % 22 1.51804709981153e-16
    % 23 4.97025454380717e-16
    % 24 2.12329225569982e-15
    % 25 1.15538665120107e-14
    % 26 7.85139141507655e-14
    % 27 6.46978632658697e-13
    % 28 6.28797245253455e-12
    % 29 6.9515055993214e-11
    % 30 8.44662535479344e-10
    % 31 1.08138350234573e-08
    % 32 1.39993514872232e-07
    % 33 1.74592025138429e-06
    % 34 2.0007664907852e-05
    % 35 0.000199964169544392
    % 36 0.00165344371949518
    % 37 0.0107247055898397
    % 38 0.051879621133079
    % 39 0.179177918678826
    % 40 0.430710777865585 
    % % 0.081065989385150683451
    % 41 0.727439535261169
    % 42 0.922273476805495
    % 43 0.98845058411361
    % 44 0.999209047080598
    % 45 0.999977763309097414649 
    % 46 0.999999770959685092017219254
    % 47 0.999999999229411
    % 48 0.999999999999246368057553
    % 49 0.99999999999999981016346751
    % 50 0.999999999999999999989130
    6 4.547e-13
    7 1.682e-10
    8 3.431e-11
    9 3.232e-12
    10 2.979e-13
    11 3.014e-14
    12 4.004e-15
    13 5.785e-16
    14 1.125e-16
    15 2.771e-17
    16 8.889e-18
    17 3.803e-18
    18 2.181e-18
    19 1.719e-18
    20 1.831e-18
    21 2.634e-18
    22 5.139e-18
    23 1.333e-17
    24 4.569e-17
    25 2.035e-16
    26 1.154e-15
    27 8.158e-15
    28 7.021e-14
    29 7.127e-13
    30 8.286e-12
    31 1.066e-10
    32 1.461e-09
    33 2.048e-08
    34 2.811e-07
    35 3.607e-06
    36 4.1265e-05
    37 0.000399
    38 0.003118
    39 0.018602
    40 0.081065
    41 0.248786
    42 0.530065
    43 0.806048
    44 0.954082
    45 0.994500
    46 0.999702
    47 0.9999935
    48 0.999999948
    49 0.99999999986
    50 0.9999999999999093
    };
    \addlegendentry{$\pr{\alpha(\CC_q) < 1}$}
    \path [draw=darkgray176]
       (axis cs:521.246753246753,401.215714291323)
    --(axis cs:564.190476190476,401.215714291323)
    --(axis cs:564.190476190476,406.539080989012)
    --(axis cs:521.246753246753,406.539080989012)
    --cycle;
    \end{axis}

    \begin{axis}[
    % log basis y={10},
    tick align=outside,
    tick pos=left,
    x grid style={lightgray204},
    xmin=41.8, xmax=50.2, 
    xmajorgrids, ymajorgrids,
    xtick style={color=black},
    y grid style={lightgray204},
    ymin=0.48, ymax=1.02846446110672,
    ticklabel style = {font=\tiny},
    width=3.9cm,
    height=3.6cm,
    at={(1.4cm, 3.2cm)}
    ]
    \addplot [draw=violet, fill=violet, mark=+, only marks, thick]
    table{%
    x y
    40 0.081065
    41 0.248786
    42 0.530065
    43 0.806048
    44 0.954082
    45 0.994500
    46 0.999702
    47 0.9999935
    48 0.999999948
    49 0.99999999986
    50 0.9999999999999093
    };
    \end{axis}
      \node[
          draw=gray,
          rectangle,
          minimum width=34.5pt,
          minimum height=10pt
      ] (a) at (6.12,5.43) {};
      % \path [draw=gray] (1.795,5.02) -- (5.77,5.595);
      % \path [draw=gray] (4.12,3.0) -- (6.635,5.21);
      \path [draw=gray] (1.4,5.22) -- (5.45,5.625);
      \path [draw=gray] (3.72,3.2) -- (6.798,5.237);
  \end{tikzpicture}
% }
\caption{Log-linear plot of the numerical exact computation of the
probability that a uniformly random $q$-qubit IQP circuit satisfies
\autoref{eq:cczCondition}. The $y$-axis of the zoomed-in plot is in linear
scale.}
\label{fig:prAlphaLessThanOne}
\end{figure}
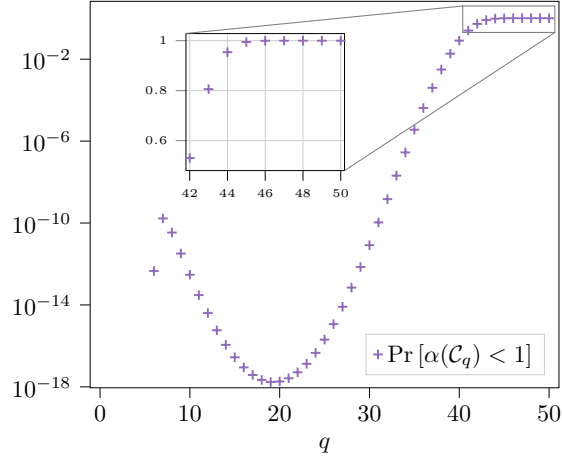

Recall that we denote by $M$ the size
of $\GG_\CC$ and equivalently the number of photons per shot required to
implement the probability estimation via \cite{mezher_solving_2023}. The
expected numbers of $Z$, $CZ$ and $CCZ$ gates in a uniformly random $q$-qubit
IQP circuit are $\frac{1}{2} q$, $\frac{1}{2}\binom{q}{2}$ and
$\frac{1}{2}\binom{q}{3}$ respectively. The expected number of photons per shot
required to estimate the zero-zero probability of a uniformly random $q$-qubit
IQP circuit using the graph encoding is therefore
\begin{equation}
   \e M = q + \frac{1}{2} \left(q + 3\binom{q}{2} + 5\binom{q}{3}\right) = \bigo{q^3}.
\end{equation}

\begin{table*}[t]
    \centering
    \captionsetup{width=.9\linewidth}
    \begin{tabular}{c|c||c|}
    \cline{2-3}
                                            & KLM                        & Graph  encoding                     \\ \hline
    \multicolumn{1}{|l|}{\# photons / shot}         & $q + 2\#(CZ) + 12\#(CCZ)$  & $q + \#(Z) + 3\#(CZ) + 5\#(CCZ)$    \\
    \multicolumn{1}{|l|}{\# modes}           & $2q + 2\#(CZ) + 12\#(CCZ)$ & $2q + 2\#(Z) + 6\#(CZ) + 10\#(CCZ)$ \\ \hline
    \end{tabular}
    \caption{Summary of the resources requirements of the KLM scheme and the
    graph encoding technique for estimating the zero-zero probability of a
    $q$-qubit IQP circuit with diagonal gates drawn at random from the set \{Z,
    CZ, CCZ\}.}
    \label{tbl:resourceRequirements}
\end{table*}

\subsection{Single-photons needs}\label{sec:photons}

The \autoref{tbl:resourceRequirements} summarizes the resources requirements of
the KLM scheme and the graph encoding technique for estimating the zero-zero
probability of a $q$-qubit IQP circuit with diagonal gates drawn at random from
the set $\{Z, CZ, CCZ\}$. An important parameter when dealing with linear
optics, and in particular near term devices, is the rate of $n$-photon events.
State-of-the-art on-chip photonic devices can produce $6$ photons
events at a rate of $\sim$4Hz
\cite{maring_versatile_2024}. This makes the number of
photons needed in the computation a critical parameter, and in particular a
factor utterly more limiting than the number of modes. To this extend, the graph
encoding technique is more profitable than the KLM scheme.

Let us define the two quantities ${K(\CC) := 2\#(CZ) + 12\#(CCZ)}$ and ${G(\CC)
:= \#(Z) + 3\#(CZ) + 5\#(CCZ)}$, where $\#(\UU)$ is the number of occurrences of
the gate $\UU$ in $\CC$. $K(\CC)$ and $G(\CC)$ quantify the number of photons
per shot (up to $q$) required to estimate the zero-zero probability of the
$q$-qubit IQP circuits associated $\CC$ using the KLM scheme and the graph
encoding technique respectively. If the photon coincidence rate is the limiting
factor of the computation, then the graph encoding is more suitable whenever
$G(\CC) < K(\CC)$, that is, whenever $\#(Z) + \#(CZ) < 7\#(CCZ)$. Conversely,
KLM is more profitable than the graph encoding technique as soon as $7\#(CCZ) <
\#(Z) + \#(CZ)$. The factor $7$ depends on the CCZ implementation we chose to
compare with, and might not be minimal. To that extend, we show that even if
were the case that the KLM implementation becomes more advantageous when
$\#(CCZ) < \#(Z) + \#(CZ)$, the fraction of such IQP circuits is negligible
compared to the total number of IQP circuits. 

The point of this short section is the following \autoref{{th:sizePq}}.

\begin{theorem}\label{th:sizePq} Let $P_q$ be the set of $q$-qubit IQP circuits
    for which the estimation of the zero-zero probability via the graph encoding
    technique requires fewer photons per shot. Then, 
    \begin{equation}
        \lim_{q \to \infty} \frac{\abs{P_q}}{\abs{\II_q}} = 1.
    \end{equation}
\end{theorem}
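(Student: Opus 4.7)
The plan is to mimic the structure of the proof of \autoref{th:sizeSq}: the criterion $G(\CC) < K(\CC)$ rewrites as the explicit inequality $\#(Z) + \#(CZ) < 7\#(CCZ)$, so I reduce showing $|P_q|/|\II_q| \to 1$ to showing that its complement is negligible. Writing $x = \#(Z)$, $y = \#(CZ)$, $z = \#(CCZ)$, each circuit in $\II_q$ is specified by independently choosing which positions carry a gate, so that
\begin{equation*}
    \abs{P_q^c} = \sum_{\substack{0\leq x\leq q\\ 0\leq y \leq \binom{q}{2}\\ 0\leq z \leq \binom{q}{3}\\ x+y\geq 7z}} \binom{q}{x}\binom{\binom{q}{2}}{y}\binom{\binom{q}{3}}{z}.
\end{equation*}

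First I would observe that the constraint $x+y \geq 7z$, combined with $x \leq q$ and $y \leq \binom{q}{2}$, forces $z \leq \frac{1}{7}(q+\binom{q}{2}) = O(q^2)$. I then trivially bound the sums in $x$ and $y$ by $2^q$ and $2^{\binom{q}{2}}$ respectively, obtaining
\begin{equation*}
    \abs{P_q^c} \leq 2^{q+\binom{q}{2}}\sum_{z=0}^{\lfloor (q+\binom{q}{2})/7\rfloor}\binom{\binom{q}{3}}{z}.
\end{equation*}

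Next I would apply the same partial-sum bound on binomial coefficients used in \autoref{th:sizeSq}, namely $\sum_{k=0}^d \binom{n}{k} \leq 2^{nH(d/n)}$ valid for $d \leq n/2$. This hypothesis holds for $q$ large enough since $\frac{1}{7}(q+\binom{q}{2}) = \Theta(q^2)$ while $\binom{q}{3} = \Theta(q^3)$. With $n = \binom{q}{3}$ and $d = \Theta(q^2)$, the ratio $d/n = \Theta(1/q)$ tends to $0$, and using $H(\varepsilon) \sim -\varepsilon \log_2 \varepsilon$ as $\varepsilon\to 0$ gives $\binom{q}{3}\cdot H(d/n) = O(q^2 \log q)$.

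Putting it together, $\abs{P_q^c} \leq 2^{q + \binom{q}{2} + O(q^2\log q)}$, whereas $\abs{\II_q} = 2^{\binom{q}{3}+\binom{q}{2}+q}$, so
\begin{equation*}
    \frac{\abs{P_q^c}}{\abs{\II_q}} \leq 2^{-\binom{q}{3} + O(q^2\log q)} \xrightarrow[q\to\infty]{} 0,
\end{equation*}
which yields the claim. The only step requiring care is the asymptotic evaluation of the binary entropy at $d/n = \Theta(1/q)$ and verifying that the $O(q^2\log q)$ correction is dominated by the cubic $\binom{q}{3}$; everything else is bookkeeping analogous to the argument already carried out in \autoref{th:sizeSq}.
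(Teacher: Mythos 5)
Your proof is correct and follows essentially the same route as the paper's: reduce to the complement, write its cardinality as a triple sum of binomial coefficients, decouple the constraint by bounding $z$ uniformly by $O(q^2)$, and apply the partial-sum entropy bound $\sum_{k\le d}\binom{n}{k}\le 2^{nH(d/n)}$ with $d/n=\Theta(1/q)\to 0$. The only (harmless) difference is that you keep the exact threshold $(x+y)/7$ whereas the paper coarsens the complement to the superset $\{\#(CCZ)<\#(Z)+\#(CZ)\}$ before bounding; both give $\abs{P_q^c}/\abs{\II_q}\to 0$.
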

\begin{proof}
    The proof follows the lines of that of \autoref{th:sizeSq}. Let $P_q \subset
    \CC_q$ be the set of $q$-qubit IQP circuits such that ${\#(CCZ) > \#(Z) +
    \#(CZ)}$ and let $P_q^c \subset \CC_q$ be the complement of $P_q$ in $\II_q$,
    that is, the set of $q$-qubit IQP circuits for which the estimation of the
    zero-zero probability via the graph encoding technique requires more photons
    per shot. Precisely, $P_q^c$ is the set of $q$-qubit IQP circuits such that
    ${\#(CCZ) < \#(Z) + \#(CZ)}$, whose cardinality is given by
    \begin{equation}
    \begin{aligned}
    \abs{P_q^c} & = \sum_{x = 0}^q
        \sum_{y = 0}^{\binom{q}{2}}
        \sum_{z = 0}^{x+y - 1} \binom{q}{x} \binom{\binom{q}{2}}{y} \binom{\binom{q}{3}}{z} 
        \leq \sum_{x = 0}^q
        \sum_{y = 0}^{\binom{q}{2}}
        \sum_{z = 0}^{\binom{q}{2} + q} \binom{q}{x} \binom{\binom{q}{2}}{y} \binom{\binom{q}{3}}{z}. \\
    \end{aligned}
    \end{equation}
    To use the upper bound of \autoref{eq:binomialPartialSumUpperBound}, we need to
    find $q$ such that ${\binom{q}{2} + q \leq \frac 1 2 \binom{q}{3}}$, that is,
    ${q \geq 10}$. Hence, 
    \begin{equation}
        \abs{P_q^c} \leq 2^{ q + \binom{q}{2} + H\left(\frac{q + \binom{q}{2}}{\binom{q}{3}}\right) \binom{q}{3}},
    \end{equation}
    and as such, as $q$ grows, the probability that a uniformly random $q$-qubit IQP
    circuit is such that ${\#(CCZ) < \#(Z) + \#(CZ)}$ is given by
    \begin{equation}
    \begin{aligned}
        \lim_{q \to \infty} \frac{\abs{P_q^c}}{\abs{\II_q}}
    & \leq \lim_{q \to \infty}  \frac{2^{q + \binom{q}{2} + H\left(\frac{q + \binom{q}{2}}{\binom{q}{3}}\right) \binom{q}{3}}}{2^{q + \binom{q}{2} + \binom{q}{3}}}
    = 0.
    \end{aligned}
    \end{equation}
\end{proof}
In other words, ${\#(CCZ) > \#(Z) + \#(CZ)}$ almost surely for large enough $q$
and the graph encoding technique requires fewer photons per shot.

\end{document}